\documentclass[12pt]{article}
\usepackage{amssymb,amsmath,color,amsmath,bm}
\usepackage{enumitem}
\usepackage[colorlinks,linkcolor=blue]{hyperref}
\usepackage{geometry}
\usepackage{tabularx}
\usepackage{bm}
\usepackage{authblk}
\usepackage{nicematrix}
\usepackage{tikz}
\usepackage[square, comma, sort&compress, numbers]{natbib}
\usepackage{nicematrix}
\usepackage{float} 
\usepackage{multirow} 
\usepackage{booktabs} 
\newtheorem{theorem}{Theorem}[section]
\newtheorem{definition}{Definition}[section]
\newtheorem{lemma}{Lemma}[section]
\newtheorem{example}{Example}[section]

\newtheorem{remark}{Remark}[section]

\catcode`@=11 \@addtoreset{equation}{section} \catcode`@=12
\begin{document}
	\title{\bf New Construction of Power Functions with Low $c$-Differential Uniformity over Finite Fields}
	\author{Zhiye Yang$^1$, Yan Wang$^2$,  Keqin Feng$^3$\\
		~\\
		$^1$ Research Center for Number Theory and Its Applications\\
		School of Mathematics, Northwest University\\ Xi'an 710127,  Shaanxi, China\\
		E-mail address: zyyang02@126.com \\
		$^2$ School of Science, Xi’an University of Architecture and Technology,\\ Xi’an, 710311, Shaanxi, China\\ 
		E-mail address: wangyan@xauat.edu.cn\\
		$^3$ Department of Mathematical Sciences,Tsinghua University,\\ Beijing, 100084, China.\\
		E-mail address: fengkq@tsinghua.edu.cn\\} 
	\date{}
	\maketitle
	{\bf Abstract.}
	{\small 
This paper investigates the $c$-differential uniformity of power functions over finite fields, an important class of cryptographic functions with favorable differential properties. 
Specifically, for finite fields $\mathbb{F}_q$ satisfying $q-1=en$ with $e\ge 3$ and $e\mid n$, we prove that there exists 
 $c\in\mathbb{F}_q\setminus\{0,1,\epsilon,\dots,\epsilon^{e-1}\}$, where $\epsilon$ is an $e$-th primitive root of unity in $\mathbb{F}_q^*$, the constructed power functions $f(x)=x^{ln+1}$ with $1\le l\le e-1$ and $\gcd(l,e)=1$ satisfy the upper bound $\Delta(f,c)\le e$, provided that certain cyclotomic conditions hold.  
 We show that our conditions are mild; 
 namely, such power functions can be constructed over infinitely many extension fields $\mathbb{F}_q$ of $\mathbb{F}_p$ for any given $e\ge 3$ and prime $p$ with $p\nmid e$. 
  Furthermore, based on the Weil bound for multiplicative character sums, we prove that the obtained upper bound is tight for sufficiently large $q$, demonstrating the optimality of our results. We also analyze the special case $c=-1$ and derive simplified explicit conditions. In particular, we explicitly characterize the admissible parameters for the case $e=3$ and present concrete function examples for practical validation. 

\textbf{Keywords:} Finite fields; $c$-differential uniformity; Power function; Weil bound. 

\section{Introduction}
 
 

 Differential cryptanalysis, introduced by Biham and Shamir\cite{biham1991}, is one of the most powerful statistical attacks against symmetric‑key block ciphers. To evaluate the resistance of cryptographic functions against such attacks, Nyberg\cite{nyberg1994} introduced the concept of \emph{differential uniformity} in 1993. For a function $f\colon \mathbb{F}_q\to \mathbb{F}_q$, the differential uniformity is defined as the maximum, over all $a\in\mathbb{F}_q^*$ and $b\in\mathbb{F}_q$, of the number of solutions $x\in\mathbb{F}_q$ to the equation $f(x+a)-f(x)=b$. 
 
 Recently, Ellingsen et al.\cite{elling2020tit} introduced a generalization of the classical differential, known as the multiplicative $c$-differential, which considers equations of the form $f(x+a)-cf(x)=b$. This new framework is motivated by cryptographic primitives that employ multiplication as a fundamental operation, such as the IDEA cipher\cite{borisov2002}. The \emph{$c$-differential uniformity} is formally defined as follows.
 \begin{definition}\label{defcdiff}
 	Let $q=p^m$, where $p$ is a prime and $m\ge 1$.
 	Let $\mathbb{F}_q$ denote the finite field with $q$ elements, and let $f\colon \mathbb{F}_q\to\mathbb{F}_q$ be a function.
 	For $a,b,c\in\mathbb{F}_q$ with $c\neq 0$, we denote by $N(f,c,b,a)$ the number of solutions $x\in\mathbb{F}_q$ of the equation
 	\begin{equation}\label{eq:c_diff_eq}
 		f(x+a)-c f(x)=b.
 	\end{equation}
 	
 	For each $c\in\mathbb{F}_q^*$, the $c$-differential uniformity $\Delta(f,c)$ of $f$ over $\mathbb{F}_q$ is defined as
 	\[
 	\Delta(f,c)=\max\bigl\{ N(f,c,b,a)\,\big|\, a,b\in\mathbb{F}_q,\text{ and } a\neq 0 \text{ if } c=1\bigr\}.
 	\]
 	
 	If $\Delta(f,c)=1$, then $f$ is said to be a perfect $c$-nonlinear (PcN) function over $\mathbb{F}_q$.
 	If $\Delta(f,c)=2$, then $f$ is said to be an almost perfect $c$-nonlinear (APcN) function over $\mathbb{F}_q$.
 \end{definition}
  A smaller $c$-differential uniformity $\Delta(f,c)$ corresponds to better resistance to differential attacks. Accordingly, functions with low $c$-differential uniformity have received significant attention.
  
  
 A polynomial $f\in\mathbb{F}_q[x]$ is a \emph{permutation polynomial} over $\mathbb{F}_q$ if the induced map $f\colon \mathbb{F}_q\to\mathbb{F}_q$ is bijective. In addition, $f$ is said to be an \emph{involution} if it is its own compositional inverse, i.e., $f(f(x)) = x$ for all $x\in\mathbb{F}_q$\cite{lidl1997FFA}.
 Permutation polynomials have wide applications in cryptography, coding theory, and combinatorial designs. In particular, those with low 
 $c$-differential uniformity serve as excellent candidates for S-boxes in block ciphers\cite{carlet2020boolean}. Among permutation polynomials over finite fields, power functions $f(x)=x^d$ constitute one of the most widely investigated families, due to their simple algebraic structure and low implementation cost in cryptographic applications. Recall that $f(x)=x^d$ is a permutation of $\mathbb{F}_q$ if and only if $\gcd(d,q-1)=1$. 
 
 Over the past decade, substantial research has been devoted to the study of 
$c$-differential uniformity $\Delta(f,c)$ for various classes of functions; we refer the reader to the survey \cite{Mesnager2022CCDS} for a comprehensive overview. In particular, power functions with low $c$-differential uniformity have attracted considerable interest due to their efficiency in hardware implementations. For the convenience of the reader, we summarize some known results on the $\Delta(f,c)$ of power functions in Table \ref{fsummary}, where we also include our new results for comparison. 
For further results, we refer the reader to the references cited in the table, as well as the representative works~\cite{YanZhang2022DCC,YanM2023tit,TuL2023tit,Yan2024Dm}. 

\begin{table}[htbp!]
	\centering
	\caption{Known and new results on the $\Delta(f,c)$ of $f(x)=x^d$ over $\mathbb{F}_{p^m}$ with $c\neq 1$}
	\label{fsummary}
	\setlength{\tabcolsep}{3.8pt}
	\footnotesize
	\begin{tabular}{|c|c|c|c|c|}
		\hline
		$p$ & $d$ & conditions & $\Delta(f,c)$ & Refs. \\
		\hline
		$>2$ & $2$ & none & $2$ & \cite{elling2020tit} \\
		\hline
		any & $p^m-2$ & $c=0$ & $1$ & \cite{elling2020tit} \\
		\hline
		$2$ & $2^m-2$ & $c\neq 0,\ \mathrm{Tr}_1^m(c)=\mathrm{Tr}_1^m(c^{-1})=1$ & $2$ & \cite{elling2020tit} \\
		\hline
		odd & $p^m-2$ & $\begin{aligned}c\neq0,&\ (c^2-4c)\notin [\mathbb{F}_{p^m}]^2,\,\\
			&(1-4c)\notin [\mathbb{F}_{p^m}]^2,\,\text{or}\,\,c\neq4,4^{-1}\end{aligned}$ & $2$ & \cite{elling2020tit} \\
		\hline
		$3$ & $\dfrac{3^k+1}{2}$ & $c=-1,\ \dfrac{2m}{\gcd(k,2m)}=1$ & $1$ & \cite{elling2020tit} \\
		\hline
		$2$ & $2^m-2$ & $c\neq 0,\ \mathrm{Tr}_1^m(c)=0,\text{or }\mathrm{Tr}_1^m(c^{-1})=0$ & $3$ & \cite{elling2020tit} \\
		\hline
		odd & $p^m-2$ & $\begin{aligned}c\neq0,4,&\,4^{-1}\ (c^2-4c)\in [\mathbb{F}_{p^m}]^2\,\\
			&\text{or}\,(1-4c)\in [\mathbb{F}_{p^m}]^2\end{aligned}$ & $3$ & \cite{elling2020tit} \\
		\hline
		odd & $\dfrac{p^2+1}{2}$ & $c=-1,\,m$ odd & $1$ & \cite{Bartoli2020JAC} \\
		\hline
		odd & $p^2-p+1$ & $c=-1,\,m=3$ & $1$ & \cite{Bartoli2020JAC} \\
		\hline
		odd & $\begin{aligned}p^4&+(p-2)p^2\\
			&+(p-1)p+1\end{aligned}$ & $c=-1,\,m=5$ & $1$ & \cite{Hasan2021DCC}\\
		\hline
		odd & $(p^5+1)/(p+1)$ & $c=-1,\,m=5$ & $1$ & \cite{Hasan2021DCC} \\
		\hline
		odd & $\begin{aligned}&(p-1)p^6+p^5+(p-2)p^3\\
			&+(p-1)p^2+p;\,(p^7+1)/(p+1)\end{aligned}$ & $c=-1,\,m=7$ & $1$ & \cite{Hasan2021DCC} \\
		\hline
		odd & $(p^7+1)/(p+1)$ & $c=-1,\,m=7$ & $1$ & \cite{Hasan2021DCC} \\
		\hline
		odd & $\dfrac{p^l+1}{2}$ & $\begin{aligned}&c=-1,\,(l,2m)=1, \\
			&p\equiv 1\pmod{4},\text{or\,}\,p\equiv 3\pmod{8}\end{aligned}$ & $\frac{p+1}{2}$ & \cite{Hasan2021DCC} \\
		\hline
		$3$ & $x^{\frac{p^m+7}{2}}$ & $c=-1,\,m$ odd & $\le 2$  & \cite{LN2021DCC} \\
		\hline
		any & $d^{-1}\pmod{p^m-1}$ & $x^d$ is PcN, $c'=c^d$ & $1$ & \cite{Wang2022DAM} \\
		\hline
		$2$ & $\{2^j\}_{j\ge0},\,\{2^j(2^k+1)\}_{k,j\ge0}$ & $c\neq 1$, resp., $c\in\mathbb{F}_{2^{\gcd(k,m)}}\setminus\{1\},\,v_2(m)\le v_2(k)$ & $1$  & \cite{Wang2022DAM} \\
		\hline
		$>2$ & $\text{odd}\,2(p^k+1)^{-1}\pmod{p^m-1}$ & $c=-1$ & $1$  & \cite{Wang2022DAM} \\
		\hline
		$>2$ & $\dfrac{p^m+1}{2}\left(\dfrac{p^k+1}{2}\right)^{-1}$ & $c=-1,\ v_2(k)=v_2(m),\ p^m\equiv 1\pmod{4}$ & $1$  & \cite{Wang2022DAM} \\
		\hline
		$3$ & $\begin{aligned}\dfrac{3^k+1}{2}d&\equiv \dfrac{3^m+1}{2}\pmod{3^m-1}\\
			d&\ \text{odd}\end{aligned}$ & $k\text{ and }m\text{ are odd}
		,\,\gcd(m,k)=1$ & $1$ & \cite{Zha2021DCC} \\
		\hline
		$5$ & $\begin{aligned}\dfrac{5^k+1}{2}d&\equiv \dfrac{5^m+1}{2}\pmod{5^m-1},\\\,
			&d\ \text{odd}\end{aligned}$ & $\begin{aligned}k&\text{ and }m\text{ are positive}\\
			&\text{integer such that } \gcd(2m,k)=1\end{aligned}$ & $1$ & \cite{Zha2021DCC} \\
		\hline
		odd & $\dfrac{p^m+1}{2}(p^k+1)^{-1}$ & $\begin{aligned}&d \text{ even },\,c=-1 ,p^m\equiv 3\pmod{4}\\&\text{or}\,\,d \text{ odd },\,c=-1 ,p^m\equiv 3\pmod{4}
		\end{aligned}$ & $\begin{aligned}&\leq6\\\text{or}\,\,&\leq3
		\end{aligned}$ & \cite{Zha2021DCC} \\
		\hline
		odd & $\dfrac{p^m+1}{4}+\dfrac{p^m-1}{2}$ & $\,c=-1 ,p^m\equiv 7\pmod{8}$ & $\leq3$ & \cite{Zha2021DCC} \\
		\hline
		odd & $\dfrac{p^m-1}{2}+p^k+1$ & $\begin{aligned}\,c=-1 ,\dfrac{m}{(m,k)}\,&\text{odd},\,p^m\equiv 3\pmod{4}\\
			&\text{or }\,\,p^m\equiv 1\pmod{4}\end{aligned}$  & $\begin{aligned}&\leq3\\\text{or}\,\,&\leq6
		\end{aligned}$ & \cite{Zha2021DCC} \\
		\hline
		$3$ & $\dfrac{3^{\frac{m+1}{2}}-1}{2}\,\text{ and }\, \dfrac{3^{\frac{m+1}{2}}-1}{8}$ & $c=-1,\,m\equiv 1\pmod{4}$ & $\le 2$  & \cite{Yan2022ccds} \\
		\hline
		$3$ & $\dfrac{3^{\frac{m+1}{2}}-1}{2}+\dfrac{3^m-1}{2}$ & $c=-1,\,m\equiv 3\pmod{4}$ & $\le 2$  & \cite{Yan2022ccds} \\
		\hline
	\end{tabular}
\end{table}
\begin{table}[htbp!]
	\centering
	\setlength{\tabcolsep}{3.8pt}
	\footnotesize
	\begin{tabular}{|c|c|c|c|c|}
		\hline
		$p$ & $d$ & conditions & $\Delta(f,c)$ & Refs. \\
		\hline
		$3$ & $\dfrac{3^{\frac{m+1}{2}}-1}{8}+\dfrac{3^m-1}{2}$ & $c=-1,\,m\equiv 3\pmod{4}$ & $\le 2$  & \cite{Yan2022ccds} \\
		\hline
		$3$ & $\big(3^{\frac{m+1}{4}}-1\big)\big(3^{\frac{m+1}{2}}+1\big)$ & $c=-1,\,m\equiv 3\pmod{4}$ & $\le 4$ & \cite{Yan2022ccds} \\
		\hline
		$3$ & $\dfrac{3^{m+1}+1}{4}+\dfrac{3^m-1}{2}$ & $c=-1,\,m$ odd & $\le 4$ & \cite{Yan2022ccds} \\
		\hline
		$3$ & $\dfrac{3^m+3}{2}$ & $c=-1,\,m$ even & $2$ & \cite{Mesnager2021tit} \\
		\hline
			$odd$ & $\frac{p^m+1}{2}$ & $\begin{aligned}&c\neq \pm1,\,p^m\equiv 3\pmod{4},\\
			\,\,&\eta(\frac{1-c}{1+c})=1\end{aligned}$ & $\leq 2$ & \cite{Mesnager2021tit} \\
		\hline
		$2$ & $2^k+1$ & $\begin{aligned}&c\in\mathbb{F}_{2^{(m,k)}}\backslash\{1\},\frac{m}{(m,k)}\geq3,\,(m\,\text{odd}),\\\,&\frac{m}{(m,k)}\geq4,\,(m\,\text{even})\,\frac{2^{(m,2k)}-1}{2^{(m,k)}-1}\end{aligned}$ & $2^{(m,k)}+1$ & \cite{Mesnager2021tit} \\
		\hline
		any & $p^k+1$ & $1\neq c\in\mathbb{F}_{p^{(m,k)}}$ & $p^{(m,k)}+1$ & \cite{Mesnager2021tit} \\
		\hline
		odd & $\frac{p^k+1}{2}$ & $c=-1$ & $p^{(m,k)}+1$ & \cite{Mesnager2021tit} \\
		\hline
		odd & $\frac{p^m+1}{2}$ & $c\neq\pm 1$ & $\leq 4$ & \cite{Mesnager2021tit} \\
		\hline
		any & $\frac{2p^m-1}{3}$ & $p^m\equiv 2\pmod{3},\,\,c\neq 1$ & $\leq 3$ & \cite{Mesnager2021tit} \\
		\hline
		$>3$ & $\frac{p^m+3}{2}$ & $\begin{aligned}&c=-1,\,p^m\equiv 3\pmod{4}\\
			\text{or}\,\,&c=-1,\,p^m\equiv 1\pmod{4}\end{aligned}$ & $\begin{aligned}&\leq3\\
			\text{or}\,\,&\leq4\end{aligned}$ & \cite{Mesnager2021tit} \\
		\hline
		odd & $\frac{p^m-3}{2}$ & $c=-1$ & $\leq 4$ & \cite{Mesnager2021tit} \\
		\hline
		3 & ${3^m-3}$ & $\begin{aligned}&c=-1,\,\,m\equiv 0\pmod{4}\\
			\text{or}\,\,&c=-1,\,m\not\equiv 0\pmod{4}\end{aligned}$ & $\begin{aligned}&6\\
			\text{or}\,\,&4\end{aligned}$ & \cite{Mesnager2021tit} \\
		\hline
		odd & $p^m-3$ & $c=-1$ & $\leq 6$ & \cite{WangY2026DAM} \\
		\hline
		any & $\begin{aligned}&\dfrac{l(p^m-1)}{e}+1,\,\,\\&(l,e)=1,\,\,1\leq l\leq e-1\end{aligned}$ & $\begin{aligned}&\{c\in \mathbb{F}_q\setminus\{0,1,\epsilon,\dots,\epsilon^{e-1}\}|1-c\epsilon^\lambda\in D_i;1-\epsilon^\lambda\in D_j\},\\
			&q-1=en,\,e\mid n,\,e\geq 3,\, \mathbb{F}_q^*=\langle\pi\rangle=\bigcup_{\lambda=0}^{e-1}D_\lambda,\,\epsilon=\pi^n \end{aligned}$ & $\leq e$  & \textcolor{red}{Theorem \ref{generalcons}} \\
		\hline
		any & $\begin{aligned}&\dfrac{l(p^m-1)}{e}+1,\,\,\\&(l,e)=1,\,\,1\leq l\leq e-1\end{aligned}$ & $\begin{aligned}&\{c\in \mathbb{F}_q\setminus\{0,1,\epsilon,\dots,\epsilon^{e-1}\}|1-c\epsilon^\lambda\in D_i;1-\epsilon^\lambda\in D_j\},\\
			&q-(e^{2e}-1)(2e(e-1)-1)\sqrt{q}\geq1 \end{aligned}$ & $= e$  & \textcolor{red}{Theorem \ref{th1tight}} \\
		\hline
		odd & $\begin{aligned}&\dfrac{l(p^m-1)}{e}+1,\,\,\\&(l,e)=1,\,\,1\leq l\leq e-1\end{aligned}$ & $c=-1,2\nmid e\geq3,\,2\in D_0,\,1-\epsilon^\lambda\in D_j,\,1\leq\lambda\leq e-1,$ & $\leq e$  & \textcolor{red}{Theorem \ref{th1case-1}} \\
		\hline
		odd & $\dfrac{p^m+2}{3}$ & $\begin{aligned}&\{c\in \mathbb{F}_q\setminus\{0,1,\epsilon,\epsilon^{2}\}|1-c\epsilon^\lambda\in D_i;\,\lambda=0,1,2\},\\
			&q-1=3n,\,6\mid n \end{aligned}$ & $\leq 3$  & \textcolor{red}{Theorem \ref{th1e=3}} \\
		\hline
		odd & $\dfrac{2p^m+1}{3}$ & $\begin{aligned}&\{c\in \mathbb{F}_q\setminus\{0,1,\epsilon,\epsilon^{2}\}|1-c\epsilon^\lambda\in D_i;\,\lambda=0,1,2\},\\
			&q-1=3n,\,6\mid n \end{aligned}$ & $\leq 3$  & \textcolor{red}{Theorem \ref{th1e=3}} \\
		\hline
		odd & $\dfrac{p^m+2}{3}$ & $c=-1,\,q=s^2+27t^2$ & $\leq 3$  & \textcolor{red}{Theorem \ref{th1e=3c=-1}} \\
		\hline
		odd & $\dfrac{2p^m+1}{3}$ & $c=-1,\,q=s^2+27t^2$ & $\leq 3$  & \textcolor{red}{Theorem \ref{th1e=3c=-1}} \\
		\hline
		2 & $\dfrac{p^m+2}{3}$ & $\begin{aligned}&\{c\in \mathbb{F}_q\setminus\mathbb{F}_4|\frac{1+c\epsilon}{1+c}\in D_0;\frac{1+c\epsilon^2}{1+c}\in D_0\,\},\\
			&q-1=3n,\,3\mid n,\,q=2^{6m} \end{aligned}$ & $\leq 3$  & \textcolor{red}{Theorem \ref{th1e=3p=2}} \\
		\hline
		2 & $\dfrac{2p^m+1}{3}$ & $\begin{aligned}&\{c\in \mathbb{F}_q\setminus\mathbb{F}_4|\frac{1+c\epsilon}{1+c}\in D_0;\frac{1+c\epsilon^2}{1+c}\in D_0\,\},\\
			&q-1=3n,\,3\mid n,\,q=2^{6m} \end{aligned}$ & $\leq 3$  & \textcolor{red}{Theorem \ref{th1e=3p=2}} \\
		\hline
	\end{tabular}
\end{table}
Most known constructions of power functions with low $c$-differential uniformity focus on the special case $c=-1$, and the cases $e=1$ 
 and $e=2$. 
 In contrast, our constructions work for general $e\ge 3$ and a large family of admissible $c$, and we prove the tightness of the bound. Our main result, Theorem \ref{generalcons}, gives a sufficient condition formulated in terms of cyclotomic cosets, under which the power function $f(x)=x^{ln+1}$ over $\mathbb{F}_q$ satisfies $\Delta(f,c)\le e$, where $q-1=en$, $e\mid n$, and $\gcd(l,e)=1$. We show that these conditions are mild: for any fixed integer $e\ge 3$ and prime $p\nmid e$, there exist infinitely many extension fields $\mathbb{F}_q$ of $\mathbb{F}_p$ fulfilling these requirements, see Theorem \ref{generalnot}. Moreover, employing the Weil bound for multiplicative character sums, we prove that this upper bound is tight for sufficiently large $q$, as stated in Theorem \ref{th1tight}. We further investigate the important special case $c=-1$, and for $e=3$, we provide a complete characterization: the exact number of admissible parameters $c$ satisfying the required condition is determined (Theorem \ref{cmany}), and concrete examples are given to illustrate the construction (Example \ref{ex1}). The case $p=2$ and $e=3$ is also discussed, where the general conditions reduce to a simpler form; see Theorem \ref{th1e=3p=2} and Example \ref{ex2}.

The paper is organized as follows. In Section \ref{sec22}, we present our general result together with its consequences. Section \ref{secc=-1} is devoted to the special case $c=-1$. In Section \ref{e3}, we address the case $e=3$. Section \ref{sec5} concludes the paper and states some open problems.
\section{Genaral Results and Some Consequences}\label{sec22}
For convenience, throughout this paper, we adopt the definition of the $c$-differential uniformity $\Delta(f,c)$ from Definition \ref{defcdiff} and fix the following notation unless stated otherwise.
\begin{itemize}
\item $q$ is a prime power, $q-1=en$ with $e\geq3$ and $e|n$.
\item $\mathbb{F}_{q}$ is the finite field with $q$ elements.
\item The multiplicative group $\mathbb{F}_{q}^*=\mathbb{F}_{q}\setminus\{0\}=\langle\pi\rangle$, where $\pi$ is a primitive element of $\mathbb{F}_q$. 
\item $D_0=\langle\pi^e\rangle,\,\,D_\lambda=\pi^\lambda D_0,\,\,0\leq \lambda\leq e-1$.
\item $S:=\mathbb{F}_q\setminus\{0,-1\}=\bigcup\limits_{\mu=0}^{e-1}\bigcup\limits_{\lambda=0}^{e-1} D_{\mu\lambda}.$
\item $D_{\mu\lambda} := D_\mu \cap (D_\lambda -1)=\{x\in S\mid x\in D_\mu,\; x+1\in D_\lambda\},$.
\item All subscripts of the cosets $D_\lambda$ and $D_{\mu\lambda}$ are taken modulo $e$.
\item $\mathbb{Z}_e=\mathbb{Z}/e\mathbb{Z}=\{0,1,\cdots,e-1\}$ denotes the ring of integers modulo $e$.
\end{itemize} 

We now present our main general result on power functions with bounded $c$-differential uniformity.
\begin{theorem}\label{generalcons}
	Let $q=p^m$, where $p$ is 
	prime and $m\ge 1$. Suppose that $q-1=en$ for an integer $e\ge 3$ with $e\mid n$. Let $\pi$ be a primitive element of $\mathbb{F}_q$, so that the multiplicative group $\mathbb{F}_q^*=\mathbb{F}_q\setminus\{0\}=\langle\pi\rangle$. Define
	\[
	D_0=\langle\pi^e\rangle,\qquad D_\lambda=\pi^\lambda D_0,\quad 0\le \lambda\le e-1,
	\]
	which are the cosets of the subgroup $D_0$ in $\mathbb{F}_q^*$. These $D_\lambda$ ($0\le \lambda\le e-1$) are the classical cyclotomic classes in $\mathbb{F}_q$. Let $\epsilon=\pi^n$, which is an element of order $e$ in $\mathbb{F}_q$. Assume that $c\in\mathbb{F}_q\setminus\{0,1,\epsilon,\dots,\epsilon^{e-1}\}$ and the following two conditions hold:
	
		$(1)$ The elements $1-c\epsilon^\lambda$ for $0\le \lambda\le e-1$ lie in a common coset $D_i$ for some fixed index $i\in\{0,1,\dots,e-1\}$;
		
		$(2)$ The elements $1-\epsilon^\lambda$ for $1\le \lambda\le e-1$ lie in a common coset $D_j$ for some fixed index $j\in\{0,1,\dots,e-1\}$.

	Then for each integer $l$ with $1\le l\le e-1$ and $\gcd(l,e)=1$, set 
	\[
	d=ln+1=\frac{l(q-1)}{e}+1.
	\]
	Then the power function $f(x)=x^d$ is a permutation polynomial over $\mathbb{F}_q$, and its $c$-differential uniformity satisfies $\Delta(f,c)\le e$.
\end{theorem}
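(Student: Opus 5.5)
The plan is to write $f(x)=x^{ln+1}$ piecewise-linearly on the cyclotomic classes and then count solutions class by class. For $x\in D_k$ we have $x^n=\epsilon^k$, so $f(x)=\epsilon^{lk}x$, and $f(0)=0$.

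\emph{Permutation property.} It suffices to show $\gcd(ln+1,en)=1$. Any common divisor of $ln+1$ and $n$ divides $1$. Since $e\mid n$, we have $ln+1\equiv 1\pmod e$, so $\gcd(ln+1,e)=1$ as well.

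\emph{Reduction to one equation.} Fix $a,b$. If $a=0$, the equation is $(1-c)f(x)=b$, which has exactly one solution because $c\neq1$ and $f$ is a permutation. If $a\neq0$, put $x=ay$ and use $f(ay)=f(a)f(y)$. The equation becomes
\[
g(y):=(y+1)^d-c\,y^d=b',\qquad b'=b/a^d .
\]
It remains to show that $g(y)=b'$ has at most $e$ solutions for every $b'$.

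\emph{Solutions in $S$.} Take $y\in D_{\mu\lambda}$. Then $g(y)=\epsilon^{l\lambda}+\bigl(\epsilon^{l\lambda}-c\epsilon^{l\mu}\bigr)y$. The coefficient $\epsilon^{l\lambda}-c\epsilon^{l\mu}=\epsilon^{l\lambda}\bigl(1-c\epsilon^{l(\mu-\lambda)}\bigr)$ is nonzero because $c$ is not a power of $\epsilon$. Moreover, $\epsilon=\pi^n\in D_0$ since $e\mid n$, so condition (1) puts this coefficient in $D_i$ for every pair $(\mu,\lambda)$.

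Hence each pair $(\mu,\lambda)$ contributes at most one candidate,
\[
y=\frac{b'-\epsilon^{l\lambda}}{\epsilon^{l\lambda}-c\epsilon^{l\mu}},\qquad y+1=\frac{b'-c\epsilon^{l\mu}}{\epsilon^{l\lambda}-c\epsilon^{l\mu}}.
\]
The requirement $y+1\in D_\lambda$ is equivalent to $b'-c\epsilon^{l\mu}\in D_{\lambda+i}$. Thus, for a given $\mu$, the index $\lambda$ is determined by the class of $b'-c\epsilon^{l\mu}$. So at most one $\lambda$ works for each $\mu$, and there are at most $e$ solutions in $S$. Symmetrically, $y\in D_\mu$ is equivalent to $b'-\epsilon^{l\lambda}\in D_{\mu+i}$, so for a given $\lambda$ the index $\mu$ is determined.

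\emph{The values $y=0$ and $y=-1$ (main obstacle).} These can only occur for $b'=1$ and $b'=g(-1)=-c\eta$ with $\eta=(-1)^d$, respectively. For those two values of $b'$ I must show that the solutions in $S$ number at most $e-1$.

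For $b'=1$: we have $b'-\epsilon^{l\lambda}=1-\epsilon^{l\lambda}$. This is $0$ for $\lambda=0$, which gives no solution in $S$. For $\lambda\neq0$, $l\lambda$ runs over the nonzero residues since $\gcd(l,e)=1$, so condition (2) puts $1-\epsilon^{l\lambda}$ in $D_j$. Hence $\mu\equiv j-i$ is forced for every $\lambda$. Together with the uniqueness of $\lambda$ given $\mu$, there is at most one solution in $S$, for a total of at most $2\le e$.

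For $b'=-c\eta$: we have $b'-c\epsilon^{l\mu}=-c\bigl(\eta+\epsilon^{l\mu}\bigr)$. I will show this lies in one fixed class whenever it is nonzero.
\begin{itemize}
\item If $-\eta$ is a power $\epsilon^t$ of $\epsilon$, then $\eta+\epsilon^{l\mu}=\epsilon^{l\mu}\bigl(1-\epsilon^{t-l\mu}\bigr)$. By condition (2) this lies in a fixed class, except for the one $\mu$ where it vanishes; that $\mu$ forces $y+1=0$ and gives no solution in $S$.
\item Otherwise $\eta=1$ and $-1\notin\langle\epsilon\rangle$, which forces $q$ and $e$ odd. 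Then I write $1+\epsilon^{s}=(1-\epsilon^{2s})/(1-\epsilon^{s})\in D_0$, valid since $s\not\equiv 0$ implies $2s\not\equiv0$.
\end{itemize}
In both cases $\lambda$ is forced to a single value. By the symmetric uniqueness of $\mu$ given $\lambda$, there is at most one solution in $S$, again for a total of at most $2\le e$.

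Combining the cases gives $N(f,c,b,a)\le e$ for all admissible $a,b$, which is $\Delta(f,c)\le e$. The delicate bookkeeping is in the $y=-1$ case, where the parity of $d$ and whether $-1\in\langle\epsilon\rangle$ must be tracked.
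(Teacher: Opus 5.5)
Your overall strategy is the paper's. You linearize $x^{ln+1}$ on each $D_{\mu\lambda}$ and use condition (1) to put every coefficient $\epsilon^{l\lambda}-c\epsilon^{l\mu}$ into $D_i$. You then get at most one solution per pair $(\mu,\lambda)$ and an injective correspondence between $\mu$ and $\lambda$; you show $\mu$ determines $\lambda$ and vice versa, which is the content of the paper's Claim (II). Finally, you treat $x=0$ and $x=-1$ separately with conditions (2) and (1). Your explicit handling of $a=0$ is a small improvement: the definition allows $a=0$ when $c\neq1$, and the paper's reduction skips it.

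There is one incorrect step, though it is harmless. In your second bullet for $b'=-c\eta$ you assert $1+\epsilon^{s}\in D_0$. For $\mu=0$ (i.e.\ $s=0$) this element is $2$, and conditions (1) and (2) do not force $2\in D_0$. That is exactly the extra hypothesis the paper must impose in Theorem~\ref{th1case-1}. So your claim that $\lambda$ is forced to a single value is unjustified in that branch.

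The branch is empty, however. Since $en=q-1$ and $e\mid n$, $n$ is even whenever $q$ is odd: if $e$ is odd this follows from the parity of $q-1$, and if $e$ is even it follows from $e\mid n$. Hence $d=ln+1$ is odd and $\eta=-1$, while in characteristic $2$ one has $\eta=1=-1$. Thus $-\eta=1=\epsilon^0$ always, you are always in your first bullet with $t=0$, and the exceptional value is simply $b'=c$. This is the observation $(-1)^{d+1}=1$ made in the paper, and you should replace your case split with it.

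The same remark shows that the two exceptional values $1$ and $c$ are distinct (since $c\neq 1$). Your count of ``at most $2$'' for $b'=1$ tacitly uses this; without it you would only get at most $3$, which is still $\le e$.
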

\begin{remark}
	For power functions $f(x)=x^d$, Equation (\ref{eq:c_diff_eq}) becomes $(x+a)^d-cx^d=b$ with $a\neq 0$, which is equivalent to $(y+1)^d-cy^d=a^{-d}b$ via the substitution $x=ay$. Thus we only consider the case $a=1$ without loss of generality.
\end{remark}
To prove our main result, we require the following known result on cyclotomic classes.
\begin{lemma}\label{cyc}\cite{Tstorer1967cyclotomy}
	Let $D_i$ be a classical cyclotomic class of order $e$. Then
	\begin{enumerate}
		\item[(i)] For any $a\in D_j$ with $0\le j<e$, we have $aD_i = D_{i+j\pmod{e}}$;
		\item[(ii)] For $2\nmid q$, $-1\in D_0$ if $n$ is even, and $-1\in D_{\frac{e}{2}}$ if $n$ is odd.
	\end{enumerate}
\end{lemma}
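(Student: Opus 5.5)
The plan is to work directly from the description of the classes as cosets of $D_0=\langle\pi^e\rangle$ in the cyclic group $\mathbb{F}_q^*=\langle\pi\rangle$. Everything reduces to reading off exponents of $\pi$ modulo $e$. The one fact to record first is that $\pi^k\in D_0$ if and only if $e\mid k$. One direction is clear from the definition of $D_0$. For the other, $D_0$ has order $(q-1)/e=n$, and $\pi^k\in D_0$ means $\pi^k=\pi^{et}$ for some $t$. Hence $k\equiv et \pmod{q-1}$, and since $e\mid q-1$ this forces $e\mid k$. Consequently $D_\lambda=\pi^\lambda D_0=\{\pi^k : k\equiv \lambda \pmod e\}$. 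The $e$ sets $D_0,\dots,D_{e-1}$ are therefore pairwise disjoint and partition $\mathbb{F}_q^*$, and subscripts may be read modulo $e$.

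For (i), I will write $a\in D_j$ as $a=\pi^{j+et}$ for some integer $t$. Then
\[
aD_i=\pi^{j+et}\pi^iD_0=\pi^{i+j}\bigl(\pi^{et}D_0\bigr)=\pi^{i+j}D_0 .
\]
Here $\pi^{et}D_0=D_0$ because $\pi^{et}$ lies in the subgroup $D_0$. Finally, $\pi^{i+j}D_0=D_{i+j \pmod e}$ by the exponent description above. I expect no obstacle in this step.

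For (ii), suppose $q$ is odd. The unique element of order $2$ in the cyclic group $\mathbb{F}_q^*$ is $-1=\pi^{(q-1)/2}=\pi^{en/2}$, so it suffices to determine $en/2$ modulo $e$.
\begin{itemize}
\item If $n$ is even, then $en/2=e\cdot(n/2)$ is a multiple of $e$, so $-1\in D_0$.
\item If $n$ is odd, then $e$ must be even because $en=q-1$ is even. Writing $n=2k+1$ gives $en/2=(e/2)(2k+1)=ek+e/2\equiv e/2\pmod e$, so $-1\in D_{e/2}$.
\end{itemize}
The only point needing care is checking that $e/2$ is an integer in the odd-$n$ case, which is exactly the parity observation above. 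Otherwise the whole argument is routine bookkeeping of exponents.
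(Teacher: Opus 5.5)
Your proof is correct and complete. The observation that $\pi^k\in D_0$ if and only if $e\mid k$ (using $e\mid q-1$) reduces both parts to exponent arithmetic modulo $e$, and you correctly check that $e$ must be even when $n$ is odd, so that $D_{e/2}$ is defined. The paper gives no proof of this lemma and simply cites Storer's monograph; your argument is the standard one underlying that reference.
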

\begin{remark}\label{th1}
	\begin{enumerate}
		\item[(i)] Since $D_0$ is the unique subgroup of order $n$ in the cyclic group $\mathbb{F}_q^*$, conditions (1) and (2) are independent of the choice of the primitive element $\pi$.
		\item[(ii)] From $\epsilon=\pi^n$ and $e\mid n$, we have $\epsilon\in D_0=\langle\pi^e\rangle$, and hence $\epsilon^\lambda\in D_0$ for all $0\le \lambda\le e-1$. Combining this with condition (1), for any $\mu,\lambda\in\{0,1,\dots,e-1\}$, by Lemma \ref{cyc} we obtain
		\[
		\epsilon^\lambda-c\epsilon^\mu=\epsilon^\lambda\big(1-c\epsilon^{\mu-\lambda}\big)\in D_i,
		\]
	under condition (1) of Theorem \ref{generalcons}.	
	\end{enumerate}
\end{remark}
\textbf{Proof of Theorem \ref{generalcons}.}
From $d=ln+1$ and $e\mid n$, we have $\gcd(n,d)=\gcd(e,d)=1$. 
Thus $\gcd(q-1,d)=\gcd(en,d)=1$, and hence $f(x)=x^d$ is a permutation polynomial over $\mathbb{F}_q$.
For each $b\in\mathbb{F}_q$, let $N(f,c,b)$ denote the number of solutions $x\in\mathbb{F}_q$ to the equation
\begin{equation}\label{Deltafunc}
	(x+1)^d-cx^d=b.
\end{equation}
Our goal is to prove $\Delta(f,c,b)\le e$ for all $b\in\mathbb{F}_q$, , which implies $\Delta(f,c)\le e$.

First consider $x\in S:=\mathbb{F}_q\setminus\{0,-1\}$. We have the disjoint union
\[
S=\bigcup_{\mu=0}^{e-1}\bigcup_{\lambda=0}^{e-1} D_{\mu\lambda},
\]
where
\[
D_{\mu\lambda}:=D_\mu\cap(D_\lambda-1)=\{x\in S\mid x\in D_\mu,\; x+1\in D_\lambda\}.
\] 
Let $x\in D_{\mu\lambda}$. Then $x=\pi^\mu A$ and $x+1=\pi^\lambda B$ for some $A,B\in D_0=\langle \pi^e\rangle$, so $A^n=B^n=1$. Recall $\epsilon=\pi^n$ and $d=ln+1$, we obtain 
\[
(x+1)^d = (\pi^\lambda B)^{ln+1}=(\pi^n)^{l\lambda}\pi^\lambda B=\epsilon^{l\lambda}(x+1),
\quad
x^d = (\pi^\mu A)^{ln+1}=\epsilon^{l\mu}x.
\]
Substituting into (\ref{Deltafunc}), the equation becomes
\begin{equation}\label{Deltafunceps}
	\epsilon^{l\lambda}(x+1)-c\epsilon^{l\mu}x=b.
\end{equation}
Equivalently,
\begin{equation}\label{DeltafuncAB}
	b=(\epsilon^{l\lambda}-c\epsilon^{l\mu})x+\epsilon^{l\lambda}=(\epsilon^{l\lambda}-c\epsilon^{l\mu})(x+1)+c\epsilon^{l\mu}.
\end{equation}

By conditions (1), (2) of Theorem \ref{generalcons} and Remark \ref{th1}, we have
\[
\epsilon^{l\lambda}-c\epsilon^{l\mu}=\epsilon^{l\lambda}\big(1-c\epsilon^{l(\mu-\lambda)}\big)\in D_i
\]
for some fixed index $i\in\{0,1,\dots,e-1\}$. In particular, $\epsilon^{l\lambda}-c\epsilon^{l\mu}\neq 0$.

We now establish two key facts.

\textbf{Claim\,(I)}. For fixed $(\mu,\lambda)$, distinct elements $x\in D_{\mu\lambda}$ yield distinct values of $b$ via (\ref{Deltafunceps})
(equivalently, via (\ref{Deltafunc})).

Indeed, if $x_1,x_2\in D_{\mu\lambda}$ both satisfy (\ref{Deltafunceps}), then
\[
\bigl(\epsilon^{l\lambda}-c\epsilon^{l\mu}\bigr)(x_1-x_2)=0.
\]
Since the coefficient is nonzero, we get $x_1=x_2$.
Moreover, from (\ref{DeltafuncAB}) we may solve for $x$ and $x+1$:
\begin{equation}\label{func_x_x+1}
	x=\frac{b-\epsilon^{l\lambda}}{\epsilon^{l\lambda}-c\epsilon^{l\mu}}\in D_\mu,\,\,\,x+1=\frac{b-c\epsilon^{l\mu}}{\epsilon^{l\lambda}-c\epsilon^{l\mu}}\in D_\lambda.
\end{equation}
Thus, for each $b\in\mathbb{F}_q$, equation (\ref{Deltafunceps}) has at most one solution in each $D_{\mu\lambda}$.

\textbf{Claim\,(II)}. Suppose $a\in D_{\mu\lambda}$ and $a'\in D_{\mu'\lambda'}$ are two distinct solutions of (\ref{Deltafunceps}) for the same $b$, with $(\mu,\lambda)\neq(\mu',\lambda')$. Then necessarily $\lambda\neq\lambda'$ and $\mu\neq\mu'$. 

From (\ref{DeltafuncAB}),
\begin{equation}\label{eq:3.5}
	\begin{aligned}
		b&=(\epsilon^{l\lambda}-c\epsilon^{l\mu})a+\epsilon^{l\lambda}
		=(\epsilon^{l\lambda}-c\epsilon^{l\mu})(a+1)+c\epsilon^{l\mu}\\
		&=(\epsilon^{l\lambda'}-c\epsilon^{l\mu'})a'+\epsilon^{l\lambda'}
		=(\epsilon^{l\lambda'}-c\epsilon^{l\mu'})(a'+1)+c\epsilon^{l\mu'}.
	\end{aligned}
\end{equation}
If $\lambda=\lambda'$, then (\ref{eq:3.5}) gives
\[
\bigl(\epsilon^{l\lambda}-c\epsilon^{l\mu}\bigr)a
=\bigl(\epsilon^{l\lambda}-c\epsilon^{l\mu'}\bigr)a'.
\]
Since
\[
\frac{\epsilon^{l\lambda}-c\epsilon^{l\mu}}{\epsilon^{l\lambda}-c\epsilon^{l\mu'}}\in D_{i-i}=D_0,
\]
and $a\in D_\mu$, $a'\in D_{\mu'}$, it follows that $\mu=\mu'$, contradicting $(\mu,\lambda)\neq(\mu',\lambda')$. Hence $\lambda\neq\lambda'$.
Similarly, if $\mu=\mu'$, then (\ref{eq:3.5}) yields $\lambda=\lambda'$, again a contradiction. Therefore $\mu\neq\mu'$. 

Now suppose equation (\ref{Deltafunceps}) has solutions in $t$ distinct sets
\[
D_{\mu_1\lambda_1},\; D_{\mu_2\lambda_2},\;\dots,\; D_{\mu_t\lambda_t}.
\]
By \textbf{Claim\,(II)}, the indices $\lambda_1,\lambda_2,\dots,\lambda_t$ are pairwise distinct, and each lies in $\{0,1,\dots,e-1\}$.
Hence $t\le e$. Consequently, for every $b\in\mathbb{F}_q$, the equation has at most $e$ solutions in $S$. It remains to handle the exceptional points $x=0$ and $x=-1$.
\begin{itemize}
	\item If $x=0$, equation (\ref{Deltafunc}) gives
	\[
	b=1^d - c\cdot 0^d = 1.
	\]
	\item If $x=-1$, equation (\ref{Deltafunc}) gives
	\[
	b=0^d - c(-1)^d = -c(-1)^d = (-1)^{d+1}c.
	\]
\end{itemize}

Thus, for any
\[
b\notin \bigl\{1,\; (-1)^{d+1}c\bigr\},
\]
we already have $\Delta(f,c,b)\le e$.

We now analyze the two exceptional values separately.

\textbf{Case 1:} $b=1$ and $1\neq (-1)^{d+1}c$.

Here $x=0$ is a solution, while $x=-1$ is not. Suppose $a\in D_{\mu\lambda}$ is a solution in $S$. From (\ref{DeltafuncAB}), we get
\[
1=(\epsilon^{l\lambda}-c\epsilon^{l\mu})a+\epsilon^{l\lambda}
=(\epsilon^{l\lambda}-c\epsilon^{l\mu})(a+1)+c\epsilon^{l\mu}.
\]
Solving for $a$ and $a+1$, we obtain
\[
a=\frac{1-\epsilon^{l\lambda}}{\epsilon^{l\lambda}-c\epsilon^{l\mu}},\qquad
a+1=\frac{1-c\epsilon^{l\mu}}{\epsilon^{l\lambda}-c\epsilon^{l\mu}}.
\]
By condition (2), $1-\epsilon^{l\lambda}\in D_j$; by condition (1), $\epsilon^{l\lambda}-c\epsilon^{l\mu}\in D_i$. Hence
\[
a\in D_{j-i},\qquad a+1\in D_{i-i}=D_0.
\]
Since $a\in D_\mu$ and $a+1\in D_\lambda$, we must have
\[
(\mu,\lambda)=(j-i,0).
\]
By \textbf{Claim\,(I)}, there is at most one such solution in $S$. Therefore
\[
\Delta(f,c,1)\le 1\ (\text{from }x=0)+1\ (\text{from }S)=2< e.
\]

\textbf{Case 2:} $b=(-1)^{d+1}c$. 

Here $x=-1$ is a solution, while $x=0$ is not. Suppose $a\in D_{\mu\lambda}$ is a solution in $S$. From (\ref{DeltafuncAB}), we have
\[
(-1)^{d+1}c
=(\epsilon^{l\lambda}-c\epsilon^{l\mu})a+\epsilon^{l\lambda}
=(\epsilon^{l\lambda}-c\epsilon^{l\mu})(a+1)+c\epsilon^{l\mu},
\]
Thus
\[
a=\frac{-\epsilon^{l\lambda}\big(1-(-1)^{d+1} c\epsilon^{-l\lambda}\big)}{\epsilon^{l\lambda}-c\epsilon^{l\mu}},\quad
a+1=\frac{(-1)^{d+1}c\big(1-(-1)^{d+1}\epsilon^{l\mu}\big)}{\epsilon^{l\lambda}-c\epsilon^{l\mu}}.
\]
- If $p$ is an odd prime, then $n$ is even, so 
 $2\nmid d$, then $(-1)^{d+1}=1$. Thus 
 \[
 a=\frac{-\epsilon^{l\lambda}\big(1- c\epsilon^{-l\lambda}\big)}{\epsilon^{l\lambda}-c\epsilon^{l\mu}},\quad
 a+1=\frac{c\big(1-\epsilon^{l\mu}\big)}{\epsilon^{l\lambda}-c\epsilon^{l\mu}}.
 \]
 - If $p=2$, so $(-1)^{d+1}=1$ as well, and the same expression holds.
 Let $c\in D_s$. Then
\[
a+1=\frac{c(1-\epsilon^{l\mu})}{\epsilon^{l\lambda}-c\epsilon^{l\mu}}\in D_{s+j-i},
\]
Since $a+1\in D_\lambda$,  it follows that $\lambda=s+j-i$ is uniquely determined. Suppose $-1\in D_t$ for some fixed index $t$, 
 one checks that $a\in D_t$, so $\mu=t$ is fixed. Then Again $(\mu,\lambda)$ is uniquely determined, giving at most one solution in $S$. 
Together with $x=-1$, Thus 

\[
N\bigl(f,c,(-1)^{d+1}c\bigr)\le 1\ (\text{from }x=-1)+1\ (\text{from }S)=2< e.
\]


Combining all \textbf{cases}, we have shown that for every $b\in\mathbb{F}_q$,
$N(f,c,b)\le e.$
Therefore the $c$-differential uniformity satisfies
\[
\Delta(f,c)=\max_{b\in\mathbb{F}_q}N(f,c,b)\le e.
\]

This completes the proof of Theorem \ref{generalcons}.

$\hfill\Box$
 \begin{remark}
	When $p$ is an odd prime and $e = 2$, the conditions reduce to the requirement that $1-c$ and $1+c$ lie in the same quadratic cyclotomic class, and the conclusion of Theorem \ref{generalcons} specializes to $\Delta(f,c) \le 2$, which agrees with the results in \cite{Mesnager2021tit}. 
\end{remark}

	The following result shows that conditions (1) and (2) in Theorem \ref{generalcons} are not restrictive.
\begin{theorem}\label{generalnot}
	For any given prime $p$ and integer $e\ge 3$ with $p\nmid e$. 
	
	$(1)$ There exist infinitely many extension fields $\mathbb{F}_q$ such that $e\mid q-1$ and $q\geq e+2$. Moreover, $\mathbb{F}_q^*$ contains an element $\epsilon$ of order $e$.
	
	$(2)$ Let $\mathbb{F}_q$ be any field from (1). Then there exist infinitely many extension fields $\mathbb{F}_Q$ of $\mathbb{F}_q$ such that, for every $c\in\mathbb{F}_q\setminus\{0,1,\epsilon,\dots,\epsilon^{e-1}\}$, the conditions (1) and (2) in Theorem \ref{generalcons} hold over $\mathbb{F}_Q$ with $i=j=0$. Consequently, for infinitely many such fields $\mathbb{F}_Q$, writing $Q-1=eN$ with $e\mid N$. Then for each $d=lN+1$, where $1\le l\le e-1$ and $\gcd(l,e)=1$, the power function $f(x)=x^d$ defined over $\mathbb{F}_Q$ has $c$-differential uniformity $\Delta(f,c)\le e$.
\end{theorem}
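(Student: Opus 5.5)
Part (1) is elementary. Since $p\nmid e$, $p$ is a unit modulo $e$; let $k_0$ be its multiplicative order. Then $e\mid p^{k}-1$ for every multiple $k$ of $k_0$. Taking $q=p^{k}$ with $k$ a large multiple of $k_0$ gives infinitely many fields with $e\mid q-1$ and $q\ge e+2$. Because $\mathbb{F}_q^*$ is cyclic of order divisible by $e$, it contains an element $\epsilon$ of order $e$.

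For part (2), fix such $\mathbb{F}_q$ and $\epsilon$. The idea is to choose $Q=q^{r}$ so that every element of $\mathbb{F}_q^*$ is an $e$-th power in $\mathbb{F}_Q^*$; that is, $\mathbb{F}_q^*\subseteq D_0^{(Q)}=\{y^e: y\in\mathbb{F}_Q^*\}$, the subgroup of index $e$. Since $\mathbb{F}_Q^*$ is cyclic of order $Q-1$, the subgroup $\mathbb{F}_q^*$ (of order $q-1$) lies in the subgroup of $e$-th powers (of order $(Q-1)/e$) iff $q-1\mid (Q-1)/e$, i.e. $e(q-1)\mid Q-1$. I would also require $e\mid N=(Q-1)/e$, i.e. $e^2\mid Q-1$, so the hypotheses of Theorem \ref{generalcons} hold. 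Both conditions follow from $e^2(q-1)\mid q^r-1$.

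To make this hold, observe $\frac{q^r-1}{q-1}=1+q+\dots+q^{r-1}\equiv r \pmod{q-1}$. More robustly, $\gcd(q,e^2(q-1))=1$, so $q$ has some finite order $s$ modulo $M=e^2(q-1)$, and every $r$ that is a multiple of $s$ works. This gives infinitely many $Q=q^{r}$. I expect this step, making the divisibility $e^2(q-1)\mid Q-1$ explicit, to be the only real point; the order argument handles it cleanly.

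Next I verify the conditions with $i=j=0$. Let $c\in\mathbb{F}_q\setminus\{0,1,\epsilon,\dots,\epsilon^{e-1}\}$. Each $1-c\epsilon^\lambda$ lies in $\mathbb{F}_q$ and is nonzero, since $c\ne\epsilon^{-\lambda}$. Hence it lies in $\mathbb{F}_q^*\subseteq D_0^{(Q)}$, which is condition (1) with $i=0$. Similarly, $1-\epsilon^\lambda\in\mathbb{F}_q^*$ for $1\le\lambda\le e-1$, since $\epsilon$ has order $e$, giving condition (2) with $j=0$. The element $\epsilon$ keeps order $e$ in $\mathbb{F}_Q$, so it is (a power of) $\pi^{N}$ for a primitive $\pi$ of $\mathbb{F}_Q$. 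The set $\{1,\epsilon,\dots,\epsilon^{e-1}\}$ is the full group of $e$-th roots of unity in $\mathbb{F}_Q$, so the exclusion set coincides with that in Theorem \ref{generalcons}, and by Remark \ref{th1}(i) the choice of $\pi$ is irrelevant.

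Applying Theorem \ref{generalcons} over $\mathbb{F}_Q$ then yields $\Delta(f,c)\le e$ for $f(x)=x^{lN+1}$ with $\gcd(l,e)=1$. Finally, the set $\mathbb{F}_q\setminus\{0,1,\epsilon,\dots,\epsilon^{e-1}\}$ is nonempty because $q\ge e+2$, so the statement is not vacuous.
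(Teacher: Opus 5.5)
Your proposal is correct and follows essentially the paper's route: choose $Q=q^r$ so that $\mathbb{F}_q^*$ lies in the index-$e$ subgroup $D_0$ of $\mathbb{F}_Q^*$, which makes conditions (1) and (2) hold with $i=j=0$. The only difference is how $r$ is chosen: the paper takes the explicit $r=em$ via $\frac{Q-1}{q-1}\equiv em\equiv 0\pmod e$, whereas you take $r$ a multiple of the order of $q$ modulo $e^2(q-1)$; your divisibility is stronger than needed, since $e(q-1)\mid Q-1$ together with $e\mid q-1$ already gives $e^2\mid Q-1$.
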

{\bf Proof.} $(1)$ Since $p\nmid e$, $p$ is an invertible element in the ring $\mathbb{Z}_e=\mathbb{Z}/e\mathbb{Z}$. Let $r$ denote the multiplicative order of $p$ modulo $e$, i.e., the smallest positive integer satisfying
\[
p^r\equiv 1\pmod{e}.
\]
Set $q'=p^r$ and $K=\mathbb{F}_{q'}$. Since $e\mid q'-1$, $K^*$ contains a unique cyclic subgroup of order $e$. Take a primitive element $\pi'\in K^*$. Then
\[
\epsilon=(\pi'
)^{\frac{p^r-1}{e}}
\]
is an element of order $e$ in $K$. Moreover, every element of order $e$ in the algebraic closure $\overline{\mathbb{F}}_p$ belongs to $K$. 
Now let $\mathbb{F}_q$ be any extension of $K$, i.e., $q=(q')^l$ for some integer $l\ge 1$. Then 
\[
q-1=(q')^l-1=(q'-1)(1+q'+\cdots+(q')^{l-1}),
\]
so $e\mid q-1$. Since $q'\geq e+1$, taking $l\geq 2$ ensures $q=(q')^l\ge e+2$. Since infinitely many positive integers $l$ satisfy this condition, there exist infinitely many such fields $\mathbb{F}_q$. Note that $\epsilon$ is an element of order $e$ in $K=\mathbb{F}_{q'}$, and $K\subseteq\mathbb{F}_q$, so $\epsilon\in\mathbb{F}_q$. 

$(2)$ 
Let $\mathbb{F}_q$ be a finite field obtained from part (1). Consider the extension field $\mathbb{F}_Q$ of $\mathbb{F}_q$ with $Q=q^{em}$, where $m\ge 1$. Let $\pi$ be a primitive element of $\mathbb{F}_Q$, so that $\mathbb{F}_Q^*=\langle\pi\rangle$. Since $\mathbb{F}_q\subseteq\mathbb{F}_Q$, the multiplicative group $\mathbb{F}_q^*$ is the unique subgroup of $\mathbb{F}_Q^*$ of order $q-1$. Therefore
\[
\mathbb{F}_q^*=\left\langle \pi^{\frac{Q-1}{q-1}} \right\rangle.
\]
Set
\[
\tilde{\pi}= \pi^{\frac{Q-1}{q-1}}.
\]
We claim that $\tilde{\pi}\in \langle \pi^e\rangle = D_0$.
Since $q\equiv 1\pmod{e}$, we compute
\[
\gcd\left(e,\frac{Q-1}{q-1}\right)
=\gcd\left(e,\frac{q^{em}-1}{q-1}\right)
=\gcd\big(e,1+q+q^2+\cdots+q^{em-1}\big).
\]
Reducing each term modulo $e$, we have 
\[
\gcd\big(e,1+q+q^2+\cdots+q^{em-1}\big)
=\gcd(e,\underbrace{1+1+\cdots+1}_{em\ \text{terms}})=\gcd(e,em)=e.
\]
Hence $e\mid \frac{Q-1}{q-1}$, which implies that $\tilde{\pi}= \pi^{\frac{Q-1}{q-1}}\in D_0$. 
 It follows that $\mathbb{F}_q^*=\langle \tilde{\pi}\rangle \subseteq D_0$. 
By part (1), $\mathbb{F}_q$ contains an element $\epsilon$ of order $e$. Now take any
\[
c\in \mathbb{F}_q\setminus \{0,1,\epsilon,\dots,\epsilon^{e-1}\}.
\]
For $0\le \lambda\le e-1$, we have 
$1-c\epsilon^\lambda\in\mathbb{F}_q^*$, so $1-c\epsilon^\lambda\in D_0$.
Similarly, for $1\le \lambda\le e-1$, $\epsilon^\lambda\neq 1$, 
we have $1-\epsilon^\lambda\in\mathbb{F}_q^*\subseteq D_0$. Thus conditions (1) and (2) of Theorem \ref{generalcons} hold over $\mathbb{F}_Q$ with $i=j=0$.

Finally, recall $Q=q^{em}$. Since $e\mid q-1$ and $e\mid \frac{Q-1}{q-1}$, we get $e^2\mid Q-1$. Write $Q-1=eN$, then $e\mid N$. The last statement then follows directly from Theorem \ref{generalcons}.

$\hfill\Box$

Theorem \ref{generalcons} provides a construction of power function $f(x)=x^d$ over finite fields $\mathbb{F}_q$ with $q=p^m$, $q-1=en$ and $e\mid n$. 
 Under conditions (1) and (2) of Theorem \ref{generalcons}, for $d=ln+1$ ($1\leq l\leq e-1, \gcd(l,e)=1$) and $c\in \mathbb{F}_q\setminus \{0,1,\epsilon,\dots,\epsilon^{e-1}\}$, where $\epsilon$ is an element of order $e$, then $\Delta(f,c)\leq e$.

A natural question arises: is the bound $\Delta(f,c)\leq e$ tight? The following result (Theorem \ref{th1tight}) gives an affirmative answer. To prove it, we shall need a classical result on character sums.

\begin{lemma}\label{weil}{\rm(\cite{mullen2013handbook}, Theorem 6.2.2)}
Let $\mathbb{F}_q$ be a finite field with multiplicative group $\mathbb{F}_q^*=\langle \pi\rangle$, and let $e$ be a positive integer dividing $q-1$.
Let $\psi$ be a multiplicative character of $\mathbb{F}_q^*=\langle \pi\rangle$ of order $e$ defined by $\psi(\pi^i)=\zeta_e^i$ for $0\le i\le q-2$ and $\psi(0)=0$, where $\zeta_e=\exp(\frac{2\pi \sqrt{-1}}{e})$. 


	
	Let $f\in\mathbb{F}_q[x]$ be a polynomial of degree $d\ge 1$.
		If $f$ is not an $e$-th power in $\overline{\mathbb{F}}_q[x]$, where $\overline{\mathbb{F}}_q$ denotes the algebraic closure of $\mathbb{F}_q$, then
	\[
	\left|\sum_{x\in\mathbb{F}_q}\psi(f(x))\right|\le (d-1)\sqrt{q}.
	\]

	\end{lemma}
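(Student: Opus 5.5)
The plan follows the standard $L$-function route to the Weil bound, since this lemma is quoted from \cite{mullen2013handbook}. First I normalize $f$. Over $\overline{\mathbb{F}}_q$ write
\[
f(x)=a\prod_{k=1}^{D}(x-\alpha_k)^{m_k},
\]
with distinct roots $\alpha_k$, so $D\le d$. Since $\psi$ has exact order $e$, the hypothesis that $f$ is not an $e$-th power in $\overline{\mathbb{F}}_q[x]$ means that some $m_k$ is not divisible by $e$. The factor $\psi(a)$ has modulus $1$ and can be ignored. I will aim for the slightly stronger bound $(D-1)\sqrt q$.

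The first main step is to build the $L$-function. For $r\ge1$ let
\[
S_r=\sum_{x\in\mathbb{F}_{q^r}}\psi\bigl(N_{\mathbb{F}_{q^r}/\mathbb{F}_q}(f(x))\bigr),
\]
and set $L(T)=\exp\bigl(\sum_{r\ge1}S_rT^r/r\bigr)$. Factoring into monic irreducibles in $\mathbb{F}_q[x]$, I will show the Euler-product identity
\[
L(T)=\sum_{g\ \text{monic}}\chi(g)\,T^{\deg g},\qquad \chi(g):=\psi\bigl(\mathrm{Res}(g,f)\bigr)\ \ (\text{up to the harmless sign and leading-coefficient factor}).
\]
Here $\chi$ is multiplicative in $g$ and depends only on $g \bmod f_0$, where $f_0$ is the radical of $f$, so $\deg f_0=D$. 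Moreover $\chi$ is a character of $(\mathbb{F}_q[x]/f_0)^*$, extended by $0$.

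Next I show that $\chi$ is nontrivial. At a root $\alpha_k$ whose multiplicity $m_k$ is not divisible by $e$, the local component is $\psi^{m_k}\circ N$, which is nontrivial. For $\deg g=r\ge D$, the monic $g$ of degree $r$ run $q^{r-D}$ times over every residue class mod $f_0$. Orthogonality then makes the coefficient of $T^r$ vanish. Hence $L(T)$ is a polynomial of degree at most $D-1$, and in fact exactly $D-1$ in the nondegenerate case. Writing $L(T)=\prod_{i=1}^{D-1}(1-\omega_iT)$ and taking logarithmic derivatives gives $S_1=-\sum_i\omega_i$.

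The remaining and main obstacle is the Riemann hypothesis $|\omega_i|\le\sqrt q$. For this I will identify $L(T)$, up to trivial factors, with the $\psi$-isotypic part of the numerator of the zeta function of the smooth projective model of the superelliptic curve $y^e=f(x)$. Its Frobenius eigenvalues have modulus $\sqrt q$ by Weil's theorem for curves. Alternatively, Stepanov's method gives $|S_r|\le C q^{r/2}$ for all large $r$, and the standard argument then shows that this forces $|\omega_i|\le\sqrt q$. I would cite Weil's theorem here rather than reprove it.

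Combining the steps gives $|S_1|\le (D-1)\sqrt q\le (d-1)\sqrt q$, which is the claimed bound.
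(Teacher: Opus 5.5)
The paper does not prove this lemma; it quotes it from the Handbook of Finite Fields. Your sketch therefore reconstructs the standard argument the citation rests on, and it is correct.

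The steps all check out:
\begin{enumerate}
\item The Euler product identity holds, and the sign and leading-coefficient factor has the form $\theta^{\deg g}$ with $|\theta|=1$, so it only rescales $T$.
\item The local components are $\psi^{m_k}\circ N$, which are nontrivial exactly when $e\nmid m_k$.
\item The coefficient of $T^r$ vanishes for $r\ge D$, so $L(T)$ has degree at most $D-1$.
\item $S_1=-\sum_i\omega_i$.
\end{enumerate}
The one deep input, $|\omega_i|\le\sqrt q$, is cited in the same way the paper cites the whole statement. Stating it as an inequality rather than an equality is correct, since the Euler factors omitted at $\infty$ and at roots with $e\mid m_k$ contribute reciprocal roots of modulus $1$.

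One point in the curve identification needs a sentence. The curve $y^e=f(x)$ is geometrically reducible when $g_0=\gcd(e,m_1,\dots,m_D)>1$. In that case write $f=a h^{g_0}$ with $h$ monic in $\mathbb{F}_q[x]$, and replace $\psi$ by $\psi^{g_0}$, which has order $e/g_0$. This reduces to the geometrically irreducible case without changing $D$.

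Compared with the bare citation, your route gives the sharper bound $(D-1)\sqrt q$, where $D$ is the number of distinct roots. This matters in the paper. In the proof of Theorem~\ref{th1tight}, $f_{\bar\tau,\bar\sigma}$ has at most $2e$ distinct roots. So the factor $2e(e-1)-1$ there could be replaced by $2e-1$, which considerably weakens the size condition on $q$.
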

	\begin{theorem}\label{th1tight}
		Let $e\geq3$. Let $\mathbb{F}_q$ and the power function $f(x)=x^d$ over $\mathbb{F}_q$ be as constructed in Theorem \ref{generalcons}. That is, $q-1=en$ with $e\mid n$, $\mathbb{F}_q^*=\langle \pi\rangle$, $D_0=\langle \pi^e\rangle$, $D_\lambda=\pi^\lambda D$ for $0\leq\lambda\leq e-1$, and $\epsilon=\pi^n$ is an element of order $e$ in $\mathbb{F}_q$. For $d=ln+1$, where $1\leq l\leq e-1$,  $\gcd(l,e)=1$, and $c\in \mathbb{F}_q\setminus \{0,1,\epsilon,\dots,\epsilon^{e-1}\}$ satisfies conditions (1) and (2) of Theorem \ref{generalcons}.
		If $q-(e^{2e}-1)(2e(e-1)-1)\sqrt{q}\geq1$, then $c$-differential uniformity of $f$ is exactly $\Delta(f,c)=e$.
		\end{theorem}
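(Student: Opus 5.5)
Since Theorem \ref{generalcons} gives $\Delta(f,c)\le e$, it suffices to produce one $b$ with $N(f,c,b)=e$. Following Claim (I) and (II), a solution in $D_{\mu\lambda}$ for given $b$ is the point $x=(b-\epsilon^{l\lambda})/(\epsilon^{l\lambda}-c\epsilon^{l\mu})$, and it is a genuine solution exactly when $x\in D_\mu$ and $x+1=(b-c\epsilon^{l\mu})/(\epsilon^{l\lambda}-c\epsilon^{l\mu})\in D_\lambda$. The plan is to fix the permutation $\mu=\lambda$ (the diagonal pairs $(\lambda,\lambda)$, $0\le\lambda\le e-1$, whose $\lambda$'s are pairwise distinct as Claim (II) requires). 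Then for each $\lambda$ the coefficient is $\epsilon^{l\lambda}(1-c)\in D_i$, and the conditions become
\[
b-\epsilon^{l\lambda}\in D_{\lambda+i},\qquad b-c\epsilon^{l\lambda}\in D_{\lambda+i},\qquad 0\le\lambda\le e-1,
\]
a system of $2e$ cyclotomic membership conditions on the single variable $b$. Any $b$ satisfying all of them (and automatically $b\ne \epsilon^{l\lambda},c\epsilon^{l\lambda}$) yields $e$ distinct solutions, one in each $D_{\lambda\lambda}$, so $N(f,c,b)\ge e$ and hence $=e$.

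To count such $b$, use the standard indicator: for $y\ne0$, $\mathbf 1[y\in D_k]=\frac1e\sum_{s=0}^{e-1}\psi^s(y)\zeta_e^{-sk}$ with $\psi$ as in Lemma \ref{weil}. Then
\[
M=\sum_{b}\prod_{\lambda=0}^{e-1}\frac1{e^2}\sum_{s,t}\psi^s(b-\epsilon^{l\lambda})\psi^t(b-c\epsilon^{l\lambda})\zeta_e^{-(s+t)(\lambda+i)}
\]
(over $b$ avoiding the $2e$ roots). Expanding gives $e^{2e}$ terms; the trivial term contributes $(q-2e)/e^{2e}$, and each nontrivial term is a character sum $\sum_b\psi(g(b))$ with $g=\prod (b-\epsilon^{l\lambda})^{s_\lambda}(b-c\epsilon^{l\lambda})^{t_\lambda}$. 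The $2e$ roots $\epsilon^{l\lambda},c\epsilon^{l\lambda}$ are pairwise distinct because $c\notin\{\epsilon^k\}$ and $c\ne0$, so whenever some exponent is nonzero mod $e$, $g$ is not an $e$-th power in $\overline{\mathbb F}_q[x]$. Lemma \ref{weil} (after reducing exponents into $[0,e-1]$, degree at most $2e(e-1)$) bounds each by $(2e(e-1)-1)\sqrt q$.

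Hence $e^{2e}M\ge q-2e-(e^{2e}-1)(2e(e-1)-1)\sqrt q$ up to small correction from excluded points; I expect the main obstacle to be bookkeeping this correction so that the hypothesis $q-(e^{2e}-1)(2e(e-1)-1)\sqrt q\ge1$ exactly gives $M>0$. The excluded $b$'s (roots) contribute only to the trivial-term count and the character sums vanish there anyway, so I would instead sum over all $b$ using $\psi(0)=0$ and compare with the error, noting the terms at roots where the product of indicators is zero cause at most a bounded shift; if needed one absorbs it by observing $M$ is an integer and the main expression is positive. Conclude $M\ge1$, giving $\Delta(f,c)=e$.
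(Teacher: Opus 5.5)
Your strategy is the paper's. You fix one admissible pairing of classes and turn the $e$ solutions into $2e$ conditions $b-r_k\in D_{\gamma_k}$ on $b$. You expand with the characters $\psi^s$ and bound the $e^{2e}-1$ nontrivial terms by Lemma \ref{weil}. The paper keeps arbitrary fixed permutations $(\mu_k),(\lambda_k)$; your diagonal choice $\mu=\lambda$ is a harmless specialization. Everything up to the final estimate is fine: the setup, the distinctness of the $2e$ roots, and the non-$e$-th-power argument.

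\textbf{The gap is the last step, which fails as written.} There are two problems.

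First, the sums do not vanish at the roots. At an excluded root $r_j$, the term $\psi(g(r_j))$ is not $0$ when the corresponding exponent is $0$. It is an $e$-th root of unity, since $g(r_j)=\prod_{k\ne j}(r_j-r_k)^{s_k}\neq 0$. So restricting to $b\notin\{r_1,\dots,r_{2e}\}$ costs up to $2e$ in each of the $e^{2e}-1$ nontrivial sums, not only $2e$ in the trivial term. If you instead sum over all $b$ with $\psi^0(0)=1$, the indicator expansion equals $1/e$ rather than $0$ at each root, so it overcounts $M$ by up to $2$.

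Second, integrality cannot rescue a negative bound. If every nontrivial term is bounded by $(2e(e-1)-1)\sqrt q$, the hypothesis only guarantees $q-A\sqrt q\ge 1$, where $A=(e^{2e}-1)(2e(e-1)-1)$. Subtracting even the $2e$ from the trivial term already makes your lower bound for $e^{2e}M$ negative. The integrality of $M$ cannot turn a negative lower bound into $M\ge 1$.

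\textbf{The repair is to use the slack in the Weil estimate.} In its standard form, the bound of Lemma \ref{weil} holds with $\deg g$ replaced by the number of distinct roots of $g$. Here that number is at most $2e$, so each nontrivial sum is at most $(2e-1)\sqrt q$. Alternatively, keep the degree form but sum the true degrees: $\sum_{\bar s\neq 0}(\deg g_{\bar s}-1)=e^{2e}e(e-1)-(e^{2e}-1)$, which is smaller than $A$ by $e(e-1)(e^{2e}-2)$. The hypothesis forces $\sqrt q>A$, so this saving dominates all boundary corrections. With the distinct-roots bound and $q\ge A\sqrt q+1$,
\[
e^{2e}M\ \ge\ q-2e-(e^{2e}-1)\bigl((2e-1)\sqrt q+2e\bigr)\ \ge\ 1-2e+(e^{2e}-1)\bigl(2e(e-2)\sqrt q-2e\bigr)\ >\ 0 .
\]
The paper's own proof has the same weak point. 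It takes the trivial term to be $q$ and treats the character expansion as exact, although the orthogonality relation it invokes is valid only on $\mathbb{F}_q^*$. Once patched as above, your argument is more careful than the published one.
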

		\begin{remark}\label{tightRemark}
			
			By Theorem \ref{generalnot}, there exist infinitely many extension fields of $\mathbb{F}_q$ that also satisfy the construction of Theorem \ref{generalcons}. Hence we may choose $q$ sufficiently large so that the inequality of Theorem \ref{th1tight} holds. Moreover,  the condition $q-1=en$ with $e\mid n$ implies $q\ge e^2+1$, so
			\[
			\big|\mathbb{F}_q\setminus\{0,1,\epsilon,\dots,\epsilon^{e-1}\}\big| = q-(e+1)\ge e(e-1)\ge 1.
			\] 
			Thus there exists at least one $c \in \mathbb{F}_q \setminus \{0,1,\epsilon,\dots,\epsilon^{e-1}\}$. 
		\end{remark}
		\textbf{Proof of Theorem \ref{th1tight}.} 
Let $D_{\mu\lambda}:=D_\mu\cap(D_\lambda-1)$ ($0\le \mu,\lambda\le e-1$). 
 For $b\in \mathbb{F}_q$ and $c \in \mathbb{F}_q \setminus \{0,1,\epsilon,\dots,\epsilon^{e-1}\}$, we define
\[
N(b)=
\begin{cases}
	1, & \text{if there exist $e$ distinct solutions } a_1,\dots,a_e
		 \text{ of the equation } \\
	& (x+1)^d - c x^d = b, \text{ with } a_k\in D_{\mu_k\lambda_k} \text{ for each } 1\le k\le e,\\
	& \text{where } \{\mu_1,\dots,\mu_e\} \text{ and } \{\lambda_1,\dots,\lambda_e\} \text{ are permutations of } \{0,1,\dots,e-1\};\\[4pt]
	0, & \text{otherwise}.
\end{cases}
\]
We will show that $M=\sum\limits_{b\in \mathbb{F}_q}N(b)\geq1$.
This implies that there exists some $b\in\mathbb{F}_q$ satisfying $N(b)=1$. By definition of $N(b)$, the equation $(x+1)^d - c x^d = b$ possesses $e$ distinct solutions $x=a_k$, which gives $\Delta(f,c)\ge e$. Combining this lower bound with the upper bound $\Delta(f,c)\le e$ from Theorem \ref{generalcons}, we obtain $\Delta(f,c)=e$. 

We have that 
\begin{align*}
	&\,\, a_k \in D_{\mu_k\lambda_k}\,\;(1\le k\le e) \text{ are solutions of } (x+1)^d - c x^d = b\\
	&\iff a_k\in D_{\mu_k\lambda_k} \text{ and } \epsilon^{l{\lambda_k}}(a_k+1)-c\epsilon^{l{\mu_k}}a_k=b\,\;(1\le k\le e) \\
	&\iff a_k\in D_{\mu_k\lambda_k} \text{ and } b= (\epsilon^{l{\lambda_k}}-c\epsilon^{l{\mu_k}})a_k+\epsilon^{l{\lambda_k}}=(\epsilon^{l{\lambda_k}}-c\epsilon^{l{\mu_k}})(a_k+1)+c\epsilon^{l{\mu_k}}\,\;(1\le k\le e)\\
	&\iff a_k=\frac{b-\epsilon^{l{\lambda_k}}}{\epsilon^{l{\lambda_k}}-c\epsilon^{l{\mu_k}}}\in D_{\mu_k} \text{ and } a_k+1=\frac{b-c\epsilon^{l{\mu_k}}}{\epsilon^{l{\lambda_k}}-c\epsilon^{l{\mu_k}}}\in D_{\lambda_k}\,\;(1\le k\le e)\\
	&\iff b-\epsilon^{l{\lambda_k}} \in D_{\mu_k+i} =: D_{\alpha_k} \text{ and } b-c\epsilon^{l{\mu_k}} \in D_{\lambda_k+i} =: D_ {\beta_k}\,\;(1\le k\le e).
\end{align*}
The last equivalence follows from Remark \ref{th1} (ii), which gives $\epsilon^{l\lambda_k}-c\epsilon^{l\mu_k}\in D_i$. Here the indices $\alpha_k$ and $\beta_k$ are determined by $\alpha_k\equiv \mu_k+i \pmod{e}$ and $\beta_k\equiv \lambda_k+i\pmod{e}$ for $1\le k\le e$. 

Let $\psi$ be a multiplicative character of $\mathbb{F}_q^*=\langle \pi\rangle$ defined by $\psi(\pi^i)=\zeta_e^i$ for $0\le i\le q-2$ and $\psi(0)=0$, where $\zeta_e=\exp(\frac{2\pi \sqrt{-1}}{e})$. By orthogonality relation of multiplicative characters, for any $a\in \mathbb{F}_q^*$, 
\[
\frac{1}{e}\sum\limits_{\tau=0}^{e-1}\psi^\tau(\pi^{-i}a)=
\begin{cases}
	1, & \text{if } a\in D_i;\\
	0, & \text{otherwise}.
\end{cases}
\]
Therefore, 
\[
\begin{aligned}
	M &= \sum_{b\in\mathbb F_q} N(b)
	= \sum_{\substack{b\in\mathbb F_q,\\ b-\epsilon^{l\lambda_k}\in D_{\alpha_k},\; b-c\epsilon^{l\mu_k}\in D_{\beta_k},\\ 1\le k\le e}} 1 \\
	&= \frac{1}{e^{2e}}\sum_{b\in\mathbb F_q}
	\left( \prod_{k=1}^e \sum_{\tau_k=0}^{e-1} \psi^{\tau_k}\big((b-\epsilon^{l\lambda_k})\,\pi^{-\alpha_k}\big) \right)
	\left( \prod_{k=1}^e \sum_{\sigma_k=0}^{e-1} \psi^{\sigma_k}\big((b-c\epsilon^{l\mu_k})\,\pi^{-\beta_k}\big) \right) \\
	&= \frac{1}{e^{2e}}\sum_{b\in\mathbb F_q}
	\left( \sum_{\tau_1,\dots,\tau_e=0}^{e-1} \zeta_e^{-\sum\limits_{k=1}^e \tau_k\alpha_k} \psi\Big(\prod_{k=1}^e (b-\epsilon^{l\lambda_k})^{\tau_k}\Big) \right)
	\left( \sum_{\sigma_1,\dots,\sigma_e=0}^{e-1} \zeta_e^{-\sum\limits_{k=1}^e \sigma_k\beta_k} \psi\Big(\prod_{k=1}^e (b-c\epsilon^{l\mu_k})^{\sigma_k}\Big) \right) \\
	&= \frac{1}{e^{2e}} \left(
	q + \sum_{\substack{\tau_1,\dots,\tau_e,\sigma_1,\dots,\sigma_e=0\\ \tau_1+\cdots+\tau_e+\sigma_1+\cdots+\sigma_e \ge 1}}^{e-1}
	\zeta_e^{-\left(\sum\limits_{k=1}^e \tau_k\alpha_k+\sum\limits_{k=1}^e \sigma_k\beta_k\right)}
	\sum_{b\in\mathbb F_q}\psi\big(f_{\bar{\tau},\bar{\sigma}}(b)\big)
	\right),
\end{aligned}
\]
where 
\begin{equation*}
	\begin{cases}
		\bar{\tau}=(\tau_1,\cdots,\tau_e),\,\,\bar{\sigma}=(\sigma_1,\cdots,\sigma_e),\,\,\bar{\alpha}=(\alpha_1,\cdots,\alpha_e),\,\,\bar{\beta}=(\beta_1,\cdots,\beta_e);\\
	\bar{\tau}\cdot\bar{\alpha}=\sum\limits_{k=1}^{e}\tau_k\alpha_k,\,\,\bar{\sigma}\cdot\bar{\beta}=\sum\limits_{k=1}^{e}\sigma_k\beta_k;\\
	f_{\bar{\tau},\bar{\sigma}}(x)=\prod\limits_{k=1}^{e}((x-\epsilon^{l\lambda_k})^{\tau_k}(x-c\epsilon^{l\mu_k})^{\sigma_k}).
	\end{cases}
\end{equation*}

From $c\notin \{0,1,\epsilon,\dots,\epsilon^{e-1}\}$ and $\gcd(l,e)=1$, the $2e$ elements $\epsilon^{l\lambda_k}$ and $c\epsilon^{l\mu_k}$ for $1 \le k \le e$ are pairwise distinct. Hence the linear factors $x-\epsilon^{l\lambda_k}$ and $x-c\epsilon^{l\mu_k}$ are pairwise distinct. Let $d = \tau_1+\cdots+\tau_e+\sigma_1+\cdots+\sigma_e \ge 1$. Since $0 \le \tau_k,\sigma_k \le e-1$, we have $1\le \deg(f_{\bar \tau,\bar \sigma}(x))=d\le 2e(e-1)$. Moreover, $f_{\bar \tau,\bar \sigma}(x)$ is not an $e$-th power in $\overline{\mathbb F}_q[x]$, 
 by Lemma \ref{weil} we have 
\[
\left|\sum_{x\in\mathbb F_q}\psi(f_{\bar \tau,\bar \sigma}(x))\right|
\le (d-1)\sqrt{q} \le \big(2e(e-1)-1\big)\sqrt{q}.
\]
Thus
\[
\begin{aligned}
	M &\ge \frac{1}{e^{2e}} \biggl( q - \sum_{\substack{\tau_1,\dots,\tau_e,\sigma_1,\dots,\sigma_e=0\\ \tau_1+\cdots+\tau_e+\sigma_1+\cdots+\sigma_e \ge 1}}^{e-1}
	\Bigl| \zeta_e^{-\bar{\tau}\cdot\bar{\alpha}-\bar{\sigma}\cdot\bar{\beta}} \Bigr|
	\cdot \biggl| \sum_{b\in\mathbb{F}_q}\psi\bigl(f_{\bar{\tau},\bar{\sigma}}(b)\bigr) \biggr| \biggr) \\
	&\ge \frac{1}{e^{2e}} \Bigl( q - (e^{2e}-1)\bigl(2e(e-1)-1\bigr)\sqrt{q}\Bigr) \\
	&\ge \frac{1}{e^{2e}} > 0,
\end{aligned}
\]
by the assumed inequality in Theorem \ref{th1tight}. From the argument at the beginning of this proof, we get $\Delta(f,c)=e$.

$\hfill\Box$

\section{The case $c=-1$}\label{secc=-1}
In this section, we give further discussion for $c=-1$, one of the popular cases. 
 Recall the equation $(x+1)^d - cx^d = b$.
Let $\epsilon$ be a primitive $e$-th root of unity, so $\epsilon^e=1$.
If $e$ were even, then $-1=\epsilon^{e/2}$ belongs to $\{1,\epsilon,\dots,\epsilon^{e-1}\}$.
The requirement $c=-1\notin\{0,1,\epsilon,\dots,\epsilon^{e-1}\}$ thus implies that $e$ must be odd, so $2\nmid e$. 
Let $q$ be a power of an odd prime $p$.
Write $q-1=en$ with $e\mid n$. Since $q-1$ is even and $e$ is odd, $n$ must be even, i.e., $2\mid n$.
It follows that $d=ln+1$ is odd, $2\nmid d$. 
The conditions (1) and (2) of Theorem \ref{generalcons} become:

	$(B1)$ The elements $1-c\epsilon^\lambda=1+\epsilon^\lambda$ for $0\le \lambda\le e-1$ belong to a common coset $D_i$ for some index $i\in\{0,1,\dots,e-1\}$;

$(B2)$ The elements $1-\epsilon^\lambda$ for $1\le \lambda\le e-1$ belong to a common coset $D_j$ for some index $j\in\{0,1,\dots,e-1\}$.

For $1\le \lambda\le e-1$, we have  $1+\epsilon^\lambda=\frac{1-\epsilon^{2\lambda}}{1-\epsilon^\lambda}\in D_{j-j}=D_0$, which implies that condition (B1) holds for all $0\le \lambda\le e-1$ with $i=0$. For $\lambda=0$, $1+\epsilon^0=2$. Thus condition (B1) holds for all $0\le \lambda\le e-1$ if and only if additionally $2\in D_0$. Therefore we get the following result:
\begin{theorem}\label{th1case-1}
	Let $2\nmid e\geq3$. Let $q=p^m$ with $p\geq 3$ an odd prime, $q-1=en$ with $e\mid n$, $\mathbb{F}_q^*=\langle \pi\rangle$, $D_0=\langle \pi^e\rangle$, $D_\lambda=\pi^\lambda D_0$ for $0\leq\lambda\leq e-1$, and $\epsilon=\pi^n$ is an element of order $e$ in $\mathbb{F}_q$, Assume the following condition holds:
	
	$(B)$ $2\in D_0$, and the elements $1-\epsilon^\lambda$ for $1\le \lambda\le e-1$ belong to a common coset $D_j$ for some index $j\in\{0,1,\dots,e-1\}$.
	
	Then for $c=-1$ and for every $d=ln+1$ with $1\leq l\leq e-1$ and  $\gcd(l,e)=1$, the power function $f(x)=x^d$ over $\mathbb{F}_q$ has ($-1$)-differential uniformity $\Delta(f,-1)\leq e$.
\end{theorem}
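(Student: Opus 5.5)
The plan is to deduce Theorem \ref{th1case-1} directly from Theorem \ref{generalcons} with $c=-1$. Two things need checking: that $c=-1$ is admissible, i.e.\ $-1\notin\{0,1,\epsilon,\dots,\epsilon^{e-1}\}$, and that condition (B) implies conditions (1) and (2) of Theorem \ref{generalcons}.

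\emph{Admissibility.} Since $p$ is odd, $-1\neq 0$ and $-1\neq 1$. If $-1=\epsilon^\lambda$ for some $\lambda$, then $-1$ would lie in the cyclic group $\langle\epsilon\rangle$ of order $e$. Hence the order of $-1$, namely $2$, would divide $e$, contradicting $2\nmid e$. So $c=-1$ is admissible.

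\emph{Condition (2)} of Theorem \ref{generalcons} is verbatim the second part of (B), with the same index $j$.

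\emph{Condition (1)} asks that $1+\epsilon^\lambda$, for $0\le\lambda\le e-1$, all lie in one coset $D_i$. I will take $i=0$. For $\lambda=0$ we have $1+\epsilon^0=2\in D_0$ by (B). For $1\le\lambda\le e-1$, first note that $1+\epsilon^\lambda\neq0$, since otherwise $\epsilon^\lambda=-1$, which was excluded above. Next write
\[
1+\epsilon^\lambda=\frac{1-\epsilon^{2\lambda}}{1-\epsilon^{\lambda}}.
\]
Because $e$ is odd, $2\lambda\not\equiv0\pmod e$, so $\epsilon^{2\lambda}=\epsilon^{\lambda'}$ for some $\lambda'\equiv 2\lambda \pmod e$ with $1\le\lambda'\le e-1$. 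By (B), both numerator and denominator lie in $D_j$, so Lemma \ref{cyc}(i) gives $1+\epsilon^\lambda\in D_{j-j}=D_0$. Thus condition (1) holds with $i=0$.

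\emph{Conclusion.} Theorem \ref{generalcons} now yields that $f(x)=x^d$ is a permutation of $\mathbb{F}_q$ and that $\Delta(f,-1)\le e$ for every $d=ln+1$ with $1\le l\le e-1$ and $\gcd(l,e)=1$.

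The only delicate point is the reindexing $2\lambda\bmod e\in\{1,\dots,e-1\}$, which relies on $e$ being odd. The rest is routine.
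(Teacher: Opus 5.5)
Your proposal is correct and follows essentially the same route as the paper: specialize Theorem~\ref{generalcons} to $c=-1$, note that condition (2) is the second half of (B), and obtain condition (1) with $i=0$ from $2\in D_0$ together with the identity $1+\epsilon^\lambda=(1-\epsilon^{2\lambda})/(1-\epsilon^\lambda)\in D_{j-j}=D_0$. You also spell out two points the paper leaves implicit: $-1\notin\langle\epsilon\rangle$ because $e$ is odd, so $c=-1$ is admissible; and $2\lambda\not\equiv 0\pmod e$, so $1-\epsilon^{2\lambda}$ really does lie in $D_j$.
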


Let $d\ge 1$ be an odd integer. For $b\in \mathbb{F}_q$, we denote by $N(b)=N(f,c,b,1)$ the number of solutions in $\mathbb{F}_q$ to the equation $(x+1)^d - cx^d = b$. 
For the special case $c=-1$ and $a\in\mathbb{F}_q$, the quantity $\boldsymbol{\Delta f}(a)=(a+1)^d+a^d$ is referred to as the $\boldsymbol{\Delta f}$‑value of $a$. Some properties concerning $N(b)$ and these $\boldsymbol{\Delta f}$‑values will be given below. 
\begin{lemma}\label{c-1lemma}
	Let $N(b)$ and the $\boldsymbol{\Delta f}$‑value be defined above. Assume that all conditions in Theorem \ref{th1case-1} hold. Then
	
	$(1)$ For each $a\in \mathbb{F}_q$, $a\in D_{\mu\lambda}\iff -(a+1)\in D_{\lambda\mu}$; $\boldsymbol{\Delta f}(a)=b\iff \boldsymbol{\Delta f}(-(a+1))=-b$.
	
	$(2)$ For $b\in \mathbb{F}_q \setminus \{\pm1\}$, $N(b)=N(-b)\leq e$.
	
	 $(3)$ $N(0)=1$ and $N(1)=N(-1)=1$.
	\end{lemma}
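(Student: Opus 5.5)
The plan rests on one symmetry: the involution $a\mapsto -(a+1)$ of $\mathbb{F}_q$. For $c=-1$ we are in the setting of Theorem \ref{th1case-1}, so $e$ is odd, $n$ is even and $d=ln+1$ is odd. Since $n$ is even, Lemma \ref{cyc}(ii) gives $-1\in D_0$. Hence multiplying by $-1$ preserves every class $D_\lambda$, and the map also fixes $S=\mathbb{F}_q\setminus\{0,-1\}$.

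\emph{Part (1).} Take $a\in D_{\mu\lambda}$, so $a\in D_\mu$ and $a+1\in D_\lambda$. Then $-(a+1)\in D_\lambda$ and $-(a+1)+1=-a\in D_\mu$, which says $-(a+1)\in D_{\lambda\mu}$; the converse is the same computation. For the $\boldsymbol{\Delta f}$-values, use that $d$ is odd:
\[
\boldsymbol{\Delta f}(-(a+1))=(-a)^d+(-(a+1))^d=-\bigl((a+1)^d+a^d\bigr)=-\boldsymbol{\Delta f}(a).
\]

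\emph{Part (2).} By (1), $a\mapsto -(a+1)$ is a bijection from the solution set of $\boldsymbol{\Delta f}(x)=b$ onto that of $\boldsymbol{\Delta f}(x)=-b$, so $N(b)=N(-b)$. The bound $N(b)\le e$ follows from Theorem \ref{th1case-1}, that is, from Theorem \ref{generalcons} with $c=-1$ and $i=0$.

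\emph{Part (3), $N(0)$.} The point $x=-1/2$ is a solution, because $(1/2)^d+(-1/2)^d=0$ for odd $d$. The exceptional points are not solutions: $x=0$ gives $b=1$ and $x=-1$ gives $b=(-1)^d=-1$. Now let $x\in D_{\mu\lambda}\subseteq S$ be a solution. Equation (\ref{Deltafunceps}) with $c=-1$ and $b=0$ reads $\epsilon^{l\lambda}(x+1)+\epsilon^{l\mu}x=0$. This gives $x/(x+1)=-\epsilon^{l(\lambda-\mu)}\in D_0$, because $-1\in D_0$ and $\epsilon\in D_0$. On the other hand $x/(x+1)\in D_{\mu-\lambda}$, so $\mu=\lambda$. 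The equation then becomes $(x+1)+x=0$, forcing $x=-1/2$. Hence $N(0)=1$.

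\emph{Part (3), $N(1)$.} The point $x=0$ is a solution and $x=-1$ is not, since it gives $-1\ne 1$ for odd $p$. I will reuse Case 1 of the proof of Theorem \ref{generalcons}: any solution $a\in S$ lying in $D_{\mu\lambda}$ must have $\lambda=0$ and satisfy
\[
a=\frac{1-\epsilon^{l\lambda}}{\epsilon^{l\lambda}+\epsilon^{l\mu}}.
\]
With $\lambda=0$ the numerator vanishes, so $a=0\notin S$, a contradiction. Therefore $N(1)=1$, and $N(-1)=1$ follows from the involution in (1), which sends $0$ to $-1$. The only delicate point in the whole argument is noticing that $-1\in D_0$ (from $n$ even) and that the $\lambda=0$ numerator vanishes; the rest is direct verification.
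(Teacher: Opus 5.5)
Your proof is correct. For parts (1) and (2) and for $N(1)=N(-1)=1$ you follow the paper's argument:
\begin{itemize}
\item the involution $a\mapsto-(a+1)$, together with $-1\in D_0$ (from $2\mid n$);
\item for $b=1$, the observation that $a+1=(1+\epsilon^{l\mu})/(\epsilon^{l\mu}+\epsilon^{l\lambda})\in D_0$ forces $\lambda=0$, and then $a=0\notin S$.
\end{itemize}
You get $N(-1)=N(1)$ directly from the involution. This is slightly cleaner than the paper's appeal to part (2), which as stated excludes $b=\pm1$. You also rightly use only the $a+1\in D_0$ half of Case 1, which rests on condition (1) alone.

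The genuine difference is in $N(0)$.
\begin{itemize}
\item You linearize on each $D_{\mu\lambda}$ to get $\epsilon^{l\lambda}(x+1)+\epsilon^{l\mu}x=0$, then use $-1,\epsilon\in D_0$ to force $\mu=\lambda$, hence $x=-1/2$.
\item The paper notes that a root satisfies $a\neq 0,-1$ and $(a/(a+1))^d=-1=(-1)^d$. Since $\gcd(d,q-1)=1$, the map $x\mapsto x^d$ is injective, so $a/(a+1)=-1$, i.e.\ $a=-1/2$.
\end{itemize}
The paper's route is shorter and holds for every odd $d$ coprime to $q-1$, with no cyclotomic input. Yours is equally valid, but it depends on the specific shape $d=ln+1$ and on $2\mid n$ and $e\mid n$; these hold here anyway.
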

	{\bf Proof.} $(1)$ Let $\phi\colon \mathbb{F}_q\to\mathbb{F}_q$ be given by $\phi(a)=-(a+1)$. $\phi$ is an involution since $\phi(\phi(a))=\phi (-(a+1))=-(-(a+1)+1)=a$. 
	For $2\nmid q=p^m$, $\mathbb{F}_q$ is partitioned into $\frac{q-1}{2}$ pairs $\{a,-(a+1)\}$ and one element $-\frac{1}{2}$. Moreover, since $2\nmid d$, for any $a\in \mathbb{F}_q$, we have 
	\begin{eqnarray*}\nonumber
		\boldsymbol{\Delta f}(-(a+1))
		&=&(-(a+1)+1)^d+(-(a+1))^d\\
		&=&-(a)^d-(a+1)^d\\
		&= &-\boldsymbol{\Delta f}(a).
	\end{eqnarray*}
	Thus, $\boldsymbol{\Delta f}(a)=b\iff \boldsymbol{\Delta f}(-(a+1))=-b$.
	
	 Recall that 
	$D_{\mu\lambda}:=D_\mu\cap(D_\lambda-1)$, we get
	\begin{align*}
		a\in D_{\mu\lambda}
		&\iff a\in D_\mu,\ a+1\in D_\lambda \\
		&\iff -a\in D_\mu,\ -(a+1)\in D_\lambda\,\,\,(\text{since}\,2\mid n,\,-1=\pi^\frac{en}{2}\in D_0) \\
		&\iff -(a+1)\in D_\lambda,\quad \big(-(a+1)+1\big)=-a\in D_\mu \\
		&\iff -(a+1)\in D_{\lambda\mu}.
	\end{align*}
	
	$(2)$ The statement (2) is an immediate consequence of (1).
	
    $(3)$ If $\boldsymbol{\Delta f}(a)=0$, then $(a+1)^d+a^d=0$. Clearly  
    $a\neq 0, -1$. Hence $(\frac{a}{a+1})^d=-1$. From $\gcd(q-1,d)=1$, we get $\frac{a}{a+1}=-1$ which gives $a=-\frac{1}{2}$. Thus $N(0)=1$.
    
    By (2) we have $N(b)=N(-b)$ for all $b\in \mathbb{F}_q$. Therefore it suffices to prove $N(1)=1$ as this will imply $N(1)=N(-1)=1$. 
    
    We first note that $x=0$ is a solution of  $(x+1)^d+x^d=1$. Suppose, for contradiction, that there exists a solution $a \in S=\bigcup\limits_{\mu=0}^{e-1}\bigcup\limits_{\lambda=0}^{e-1} D_{\mu\lambda}=\mathbb{F}_q \setminus \{0,-1\}$. Then $a\in D_{\mu\lambda}$ for some $\mu, \lambda$, i.e., $a\in D_\mu$ and $a+1\in D_\lambda$. 
   Recall that $d=l n+1$ with $\gcd(l,e)=1$. Substituting into $(a+1)^d+a^d=1$ gives
    \[
    1=(\epsilon^{l\lambda}+\epsilon^{l\mu})a+\epsilon^{l\lambda}
    =(\epsilon^{l\lambda}+\epsilon^{l\mu})(a+1)-\epsilon^{l\mu}.
    \]
    Note that $\epsilon^{l\lambda}+\epsilon^{l\mu}\neq 0$ since $e$ is odd and the order of $\epsilon$ is $e$. Therefore 
    \begin{equation}\label{a+1D}
   	a+1=\frac{1+\epsilon^{l\mu}}{\epsilon^{l\mu}+\epsilon^{l\lambda}}\in D_\lambda.
   \end{equation}
   
    On the other hand, from condition (B) we know that $1+\epsilon^0=2\in D_0$ and for $1\le \mu\le e-1$, $1+\epsilon^{l\mu}=\frac{1-\epsilon^{2l\mu}}{1-\epsilon^{l\mu}}\in D_{j-j}=D_0$. Therefore $$\epsilon^{l\mu}+\epsilon^{l\lambda}=\epsilon^{l\mu}(1+\epsilon^{l\lambda-l\mu})\in D_0,\,\,0\le\mu, \lambda\le e-1.$$ 
    Thus
     \begin{equation}\label{a+1D0}
    	a+1=\frac{1+\epsilon^{l\mu}}{\epsilon^{l\mu}+\epsilon^{l\lambda}}\in D_0.
    \end{equation}
     Since distinct cosets are disjoint, from (\ref{a+1D}) and  (\ref{a+1D0}) we must have $\lambda=0$. Then 
    \[
    1=(\epsilon^{l\lambda}+\epsilon^{l\mu})a+\epsilon^{l\lambda}
    =(1+\epsilon^{l\mu})a+1.
    \]
    Hence $(1+\epsilon^{l\mu})a=0$. Since $a\neq 0$, we must have $1+\varepsilon^{l\mu}=0$, i.e., $\varepsilon^{l\mu}=-1$. However, $e$ is odd, so $(-1)^e=-1\neq 1$, which implies that $-1$ cannot be an $e$-th root of unity. This contradicts the fact that $\varepsilon^{l\mu}$ is an $e$-th root of unity.
    Therefore, no solution exists in $\mathbb{F}_q\setminus\{0,-1\}$, and $x=0$ is the unique solution of $(x+1)^d+x^d=1$. Thus $N(1)=1$, and consequently $N(-1)=1$.
    
    $\hfill\Box$

\section{The case $e=3$ }\label{e3} 
We now consider the special case $e=3$. Let $q=p^m=3n+1$ with $3\mid n$ (so $q\equiv 1 \pmod{9}$). Let $\mathbb{F}_q^*=\langle \pi\rangle$, $D_0=\langle \pi^3\rangle$, $D_\lambda=\pi^\lambda D_0$ for $0\leq\lambda\leq 2$, and let $\epsilon=\pi^n$ is an element of order $3$ in $\mathbb{F}_q$. Then $1+\epsilon+\epsilon^2=0$. From $3\mid n$ we get $\epsilon, \epsilon^2\in D_0$. For $c\in \mathbb{F}_q\setminus \{0,1,\epsilon,\epsilon^{2}\}$, conditions (1) and (2) of Theorem \ref{generalcons} become:

$(C1)$ $1-c\epsilon^\lambda$ ($0\leq\lambda\leq 2$) belong to a common coset $D_i$ for some $0\leq i\leq 2$;

$(C2)$ $1-\epsilon$ and $1-\epsilon^2$ belong to a common coset $D_j$ for some $0\leq j\leq 2$.

Condition $(C2)$ implies $\frac{1-\epsilon^2}{1-\epsilon}=1+\epsilon=-\epsilon^2\in D_0$. Since $\epsilon^2\in D_0$, this is equivalent to $-1\in D_0$, which holds if and only if $2\mid n$.
We thus obtain the following result:
\begin{theorem}\label{th1e=3}
	Let $q=p^m$ with $p\geq3$ an odd prime, and suppose $q-1=3n$ with $6\mid n$. Let $\mathbb{F}_q^*=\langle \pi\rangle$, $D_0=\langle \pi^3\rangle$, $D_\lambda=\pi^\lambda D_0$ for $0\leq\lambda\leq 2$, and let $\epsilon=\pi^n$ is an element of order $3$ in $\mathbb{F}_q$. 
	For $c\in \mathbb{F}_q\setminus \{0,1,\epsilon,\epsilon^{2}\}$, assume the following condition holds:
	
	$(C)$ $1-c$, $1-c\epsilon$ and $1-c\epsilon^2$ belong to a common coset $D_i$ for some $0\leq i\leq 2$.
	
	Then for $d=n+1$ and $d=2n+1$, the power function $f(x)=x^d$ over $\mathbb{F}_q$ has $c$-differential uniformity $\Delta(f,c)\leq 3$.
	\end{theorem}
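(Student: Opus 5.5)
The plan is to obtain Theorem \ref{th1e=3} as a direct specialization of Theorem \ref{generalcons} with $e=3$. First I check the setup. Since $q-1=3n$ with $6\mid n$, we have $e=3\ge 3$ and $e\mid n$, so the standing hypotheses of Theorem \ref{generalcons} hold. The admissible values of $l$ are those with $1\le l\le 2$ and $\gcd(l,3)=1$, namely $l=1,2$; these give exactly $d=n+1$ and $d=2n+1$. Also $c\in\mathbb{F}_q\setminus\{0,1,\epsilon,\epsilon^2\}$ is precisely the required range of $c$. Condition (1) of Theorem \ref{generalcons} for $e=3$ is literally condition $(C)$, so it remains only to verify condition (2).

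For condition (2) I must show that $1-\epsilon$ and $1-\epsilon^2$ lie in a common coset $D_j$, which is equivalent to showing that their quotient lies in $D_0$. Compute
\[
\frac{1-\epsilon^2}{1-\epsilon}=1+\epsilon=-\epsilon^2,
\]
using $1+\epsilon+\epsilon^2=0$ together with $\epsilon\neq 1$. Since $3\mid n$, Remark \ref{th1}(ii) gives $\epsilon^2\in D_0$. Since $p$ is odd and $n$ is even (because $6\mid n$), Lemma \ref{cyc}(ii) gives $-1\in D_0$. Hence $-\epsilon^2\in D_0$. By Lemma \ref{cyc}(i), if $1-\epsilon\in D_j$ then $1-\epsilon^2=(1-\epsilon)(-\epsilon^2)\in D_j$, which is condition (2).

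Applying Theorem \ref{generalcons} with $l=1$ and with $l=2$ then yields both permutation properties and $\Delta(f,c)\le 3$. I see no real obstacle here. The only point needing care is the parity input $-1\in D_0$, which is exactly why the hypothesis is strengthened from $3\mid n$ to $6\mid n$ for odd $p$.
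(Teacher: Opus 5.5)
Your proposal is correct and follows the paper's own derivation: you specialize Theorem~\ref{generalcons} to $e=3$, observe that condition (1) is literally $(C)$, and verify condition (2) via $\frac{1-\epsilon^2}{1-\epsilon}=1+\epsilon=-\epsilon^2\in D_0$, using $\epsilon^2\in D_0$ (from $3\mid n$) and $-1\in D_0$ (from $n$ even). One small correction to your closing comment: for odd $p$, $n=(q-1)/3$ is automatically even because $q-1$ is even, so $6\mid n$ is equivalent to $3\mid n$ and is not a genuine strengthening of the hypothesis.
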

	
	Next, we determine how many 
	$c$ satisfy condition $(C)$. To this end, we need the following lemma.
	\begin{lemma}\label{clc}{\rm(\cite{Tstorer1967cyclotomy})}
		Suppose that $q=p^m\equiv 1 \pmod{3}$, $D_\lambda$ ($0\leq\lambda\leq 2$) are the classical cyclotomic classes of $\mathbb{F}_q^*$ of order $3$, Define $D_{00}=D_0\cap(D_0-1)=\{x\in \mathbb{F}_q\setminus\{0,-1\}|x\in D_0,\,x+1 \in D_0\}$. Then
		
		$(1)$ There exist integers $s$ and $t$ such that $4q=s^2+27t^2$ with $s\equiv 1 \pmod{3}.$ 
		If $p\equiv 1 \pmod{3}$, then $p\nmid st$. Moreover, $s$ is uniquely determined by $q$, and $t$ is unique up to sign.
		
		$(2)$ $|D_{00}|=\frac{1}{9}(q-8+s).$ 
		
		$(3)$ $2\in D_0$ if and only if $2\mid s$, which implies $2\mid t$ and 
		$q=\left(\frac{s}{2}\right)^2+27\left(\frac{t}{2}\right)^2.
		$
		\end{lemma}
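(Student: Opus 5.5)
The plan is to prove all three parts with cubic Jacobi sums. This is the classical approach of Gauss and of \cite{Tstorer1967cyclotomy}, and it is the route I would follow rather than cite blindly.

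\textbf{Setup.} Fix the cubic character $\chi$ of $\mathbb{F}_q^*$ with $\chi(\pi)=\omega=e^{2\pi\sqrt{-1}/3}$ and $\chi(0)=0$. Put $J=J(\chi,\chi)=\sum_{x\in\mathbb{F}_q}\chi(x)\chi(1-x)\in\mathbb{Z}[\omega]$. We will use three standard facts:
\begin{itemize}
\item[(a)] $J\overline{J}=q$;
\item[(b)] $J(\chi,\overline{\chi})=-\chi(-1)=-1$, since $-1=(-1)^3$ is always a cube;
\item[(c)] the congruence $J\equiv -1\pmod 3$ in $\mathbb{Z}[\omega]$, obtained by expanding $J$ modulo $1-\omega$ and using the involution $x\mapsto 1-x$ together with the $\mathbb{F}_3^*$-symmetry.
\end{itemize}

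\textbf{Part (1).} Write $J=a+b\omega$ with $a,b\in\mathbb{Z}$. Fact (c) gives $a\equiv -1$ and $b\equiv 0\pmod 3$. Then
\[
4q=|2J|^2=(2a-b)^2+3b^2 .
\]
Set $s=2a-b=J+\overline{J}$ and $t=b/3$. This gives $4q=s^2+27t^2$ with $s\equiv -2\equiv 1\pmod 3$.

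For uniqueness, replacing $\chi$ by $\overline{\chi}$ (equivalently, changing $\pi$) replaces $J$ by $\overline{J}$. This leaves $s=J+\overline{J}$ unchanged and flips the sign of $t$. Moreover, the decomposition of $q$ in the UFD $\mathbb{Z}[\omega]$, together with the congruence normalization, pins $J$ down up to conjugation.

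When $p\equiv 1\pmod 3$, write $p=\mathfrak{p}\overline{\mathfrak{p}}$ in $\mathbb{Z}[\omega]$. Stickelberger's theorem, or a direct argument that $J$ is not divisible by both $\mathfrak{p}$ and $\overline{\mathfrak{p}}$, shows that $J$ is a unit times $\mathfrak{p}^m$. Hence $p\nmid J+\overline{J}=s$, and similarly $p\nmid t$, since otherwise $p$ would divide $J$.

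\textbf{Part (2).} Use $\frac13\sum_{a=0}^{2}\chi^a(y)=\mathbf{1}_{D_0}(y)$ to write
\[
|D_{00}|=\frac19\sum_{x\neq 0,-1}\sum_{a,b=0}^{2}\chi^a(x)\chi^b(x+1),
\]
and evaluate the nine terms one by one.
\begin{itemize}
\item $a=b=0$ contributes $q-2$.
\item The two terms with $a=0\neq b$ each give $-\chi^b(1)=-1$.
\item The two terms with $b=0\neq a$ each give $-\chi^a(-1)=-1$.
\item The two terms with $\{a,b\}=\{1,2\}$ are $\chi^a(-1)J(\chi^a,\overline{\chi}^{a})=-1$ each, by (b).
\item The terms $(1,1)$ and $(2,2)$ equal $\chi(-1)J$ and $\overline{J}$ after the substitution $x\mapsto -x$.
\end{itemize}
Summing gives $q-8+J+\overline{J}=q-8+s$, which yields (2).

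\textbf{Part (3).} This is the main obstacle: relating $\chi(2)$ to the parity of $s$. I would first try the Jacobi-sum identity $J(\chi,\chi)=\chi(4)J(\chi,\rho)$, where $\rho$ is the quadratic character (valid for odd $q$). I would then reduce $J(\chi,\rho)$ modulo $2$ in $\mathbb{Z}[\omega]$. Pairing $x$ with $1-x$ shows $J(\chi,\rho)\equiv \pm1\cdot(\text{unit})$ in a controlled way, namely $J(\chi,\rho)\equiv -1\pmod 2$ after the cubic normalization. Consequently $J\equiv -\chi(4)=-\overline{\chi}(2)\pmod 2$.

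Since $2$ is inert in $\mathbb{Z}[\omega]$, the condition $J\equiv -1\pmod 2$ means $a$ odd and $b$ even. By contrast, $J\equiv -\omega$ or $-\omega^2$ forces $s=2a-b$ to be odd. Hence $\chi(2)=1$ if and only if $2\mid s$.

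Finally, if $2\mid s$, then $4q\equiv 27t^2\pmod 4$, which forces $2\mid t$. Dividing $4q=s^2+27t^2$ by $4$ gives $q=(s/2)^2+27(t/2)^2$.

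If the mod-$2$ computation of $J(\chi,\rho)$ proves messy, the fallback is to count $\#\{x: x^3+y^3=1\}$ modulo $2$ via the cyclotomic numbers, as Gauss did.
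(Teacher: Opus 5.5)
The paper gives no proof of this lemma; it quotes Storer, and your Jacobi-sum argument is the classical derivation behind that citation. Parts (2) and (3) go through essentially as you have them. The uniqueness claim in (1), however, has a genuine gap. The factorization of $q$ in $\mathbb{Z}[\omega]$ together with the normalization $\equiv -1 \pmod 3$ does \emph{not} pin down the representation unless you also use the side condition $p\nmid s$ (for $p\equiv 1\pmod 3$), and your sketch never invokes it.

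For $q=49$ one has $4q=13^2+27\cdot 1^2=(-14)^2+27\cdot 0^2$. Both $13$ and $-14$ are $\equiv 1\pmod 3$, and the element $-7$ has norm $49$ and is $\equiv -1\pmod 3$. Only $s=13$ is correct in (2). In $\mathbb{F}_{49}=\mathbb{F}_7(i)$ one has $a+bi\in D_0\iff a^2+b^2=\pm1$, and a direct count gives $|D_{00}|=6=(49-8+13)/9$, whereas $s=-14$ would give $3$.

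The repair is as follows. Take $(s',t')$ with $4q=s'^2+27t'^2$, $s'\equiv 1\pmod 3$ and $p\nmid s'$. Put $x=\tfrac{s'+3t'}{2}+3t'\omega$, which lies in $\mathbb{Z}[\omega]$ because $s'\equiv t'\pmod 2$. Then $x\bar x=q$, $x+\bar x=s'$ and $x\equiv -1\pmod 3$.
\begin{itemize}
\item If $p\equiv 1\pmod 3$, then $p\nmid x+\bar x$ rules out $\mathfrak p\bar{\mathfrak p}\mid x$, so $x$ is an associate of $\mathfrak p^m$ or $\bar{\mathfrak p}^m$.
\item If $p\equiv 2\pmod 3$, then $x$ is automatically an associate of $p^{m/2}$, as is $J$.
\end{itemize}
The six units of $\mathbb{Z}[\omega]$ are pairwise incongruent mod $3$, and $J$ is prime to $3$. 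So the normalization forces $x\in\{J,\bar J\}$, that is, $s'=s$ and $t'=\pm t$. For $J$ being an associate of $\mathfrak p^m$, Hasse--Davenport ($J_q=-(-J_p)^m$, where $J_p$ has norm $p$) is quicker than Stickelberger.

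Three smaller points.

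(i) The correct identity is $J(\chi,\chi)=\bar\chi(4)J(\chi,\rho)$, since $x(1-x)=u$ has $1+\rho(1-4u)$ solutions. The conjugate is harmless, because $\chi(4)=1\iff\bar\chi(4)=1$. The congruence $J(\chi,\rho)\equiv -1\pmod 2$ needs no pairing: $\rho(1-x)\equiv 1\pmod 2$ for $x\neq 1$, so $J(\chi,\rho)\equiv\sum_{x\neq 1}\chi(x)=-1$. No fallback is required.

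(ii) Part (3) is false for $p=2$: there $2=0\notin D_0$, while $s=\pm 2\sqrt q$ is even. The restriction to odd $q$ that your use of $\rho$ imposes should therefore be stated explicitly.

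(iii) Reducing only modulo $1-\omega$ cannot give fact (c), since $3=-\omega^2(1-\omega)^2$. Either of the following works.
\begin{itemize}
\item Use $g(\chi)^3=qJ$ together with $g(\chi)^3\equiv\sum_{x\neq 0}\zeta_p^{3\mathrm{Tr}(x)}=-1\pmod 3$, where $g(\chi)=\sum_x\chi(x)\zeta_p^{\mathrm{Tr}(x)}$.
\item Note that $\chi(x)\chi(1-x)$ is invariant under the order-$6$ group generated by $x\mapsto 1-x$ and $x\mapsto 1/x$. Its orbits on $\mathbb{F}_q\setminus\{0,1\}$ have size $6$, size $3$ (namely $\{-1,2,1/2\}$, odd $q$ only), or size $2$ (the roots of $x^2-x+1$, where the summand equals $1$). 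Hence $J\equiv 2\equiv -1\pmod 3$.
\end{itemize}
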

		\begin{theorem}\label{cmany}
			Let the assumptions be as in Theorem \ref{th1e=3}, and suppose further that $q\geq 73$. Then the number of elements $c\in \mathbb{F}_q\setminus \{0,1,\epsilon,\epsilon^{2}\}$ satisfying condition ($C$) is 
			$\Gamma=\frac{1}{9}(q+s-26)$, where $s$ is given by Lemma \ref{clc}. 
	\end{theorem}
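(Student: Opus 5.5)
The plan is to turn condition $(C)$ into membership in the cyclotomic set $D_{00}$ via a fractional linear change of variable, and then read off $\Gamma$ from Lemma \ref{clc}(2). Under the hypotheses of Theorem \ref{th1e=3} we have $6\mid n$, so $\epsilon,\epsilon^2\in D_0$ and $-1\in D_0$ (Lemma \ref{cyc}(ii)). The starting point is the linear relation
\[
(1-c)+\epsilon(1-c\epsilon)+\epsilon^2(1-c\epsilon^2)=(1+\epsilon+\epsilon^2)-c(1+\epsilon^2+\epsilon^4)=0 .
\]
For $c\notin\{0,1,\epsilon,\epsilon^2\}$ all three terms $1-c\epsilon^\lambda$ are nonzero, so we may divide by $\epsilon^2(1-c\epsilon^2)$. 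This gives $A+B=-1$, where
\[
A=\frac{1-c}{\epsilon^2(1-c\epsilon^2)},\qquad B=\frac{1-c\epsilon}{\epsilon(1-c\epsilon^2)} .
\]

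\emph{Step 1 (translate $(C)$).} Condition $(C)$ says that $1-c$ and $1-c\epsilon$ lie in the same coset as $1-c\epsilon^2$. Since $\epsilon\in D_0$, this is equivalent to $A\in D_0$ and $B\in D_0$. Since $-1\in D_0$, the condition $A\in D_0$ is equivalent to $B+1=-A\in D_0$. Hence $(C)$ holds if and only if $B\in D_0$ and $B+1\in D_0$, that is, $B\in D_{00}$.

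\emph{Step 2 (bijection).} The map $c\mapsto B$ is a Möbius transformation. Using $\epsilon^3=1$, its inverse is
\[
c=\frac{B\epsilon-1}{B-\epsilon}.
\]
So $c\mapsto B$ is a bijection from $\mathbb{F}_q\setminus\{\epsilon\}$ onto $\mathbb{F}_q\setminus\{\epsilon\}$; the point $c=\epsilon$ is the pole and $B=\epsilon$ corresponds to $c=\infty$. The excluded values of $c$ correspond to the following values of $B$:
\begin{itemize}
\item $c=0$ corresponds to $B=\epsilon^2$;
\item $c=1$ corresponds to $B(\epsilon-1)=1-\epsilon$, i.e. $B=-1$;
\item $c=\epsilon^2$ corresponds to $B=0$;
\item $c=\epsilon$ is the pole, and the value $B=\epsilon$ has no preimage $c$.
\end{itemize}

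\emph{Step 3 (count).} Next decide which of the values $\epsilon^2,-1,0,\epsilon$ lie in $D_{00}$.
\begin{itemize}
\item $0,-1\notin D_{00}$ by definition.
\item $\epsilon^2\in D_0$ and $\epsilon^2+1=-\epsilon\in D_0$, so $\epsilon^2\in D_{00}$.
\item $\epsilon\in D_0$ and $\epsilon+1=-\epsilon^2\in D_0$, so $\epsilon\in D_{00}$.
\end{itemize}
Therefore the admissible $c$ are in bijection with $D_{00}\setminus\{\epsilon,\epsilon^2\}$. By Lemma \ref{clc}(2),
\[
\Gamma=|D_{00}|-2=\tfrac19(q-8+s)-2=\tfrac19(q+s-26).
\]
The hypothesis $q\ge 73$ serves only to make the count meaningful: since $|s|\le 2\sqrt q$, it guarantees $\Gamma\ge 1$. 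It is not needed for the identity itself.

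The step I expect to be the main obstacle is finding the linear relation and the normalization that collapse the two coset conditions into the single condition $B\in D_{00}$; once that is in hand, the rest is bookkeeping of the four exceptional values.
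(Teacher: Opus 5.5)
Your proposal is correct and takes essentially the paper's route. Both arguments use a M\"obius change of variable that turns condition $(C)$ into membership in $D_{00}$, observe that the excluded values of $c$ correspond to $0,-1$ (never in $D_{00}$) and $\epsilon,\epsilon^2$ (always in $D_{00}$), and then apply Lemma~\ref{clc}(2). The only difference is that you normalize by $\epsilon^2(1-c\epsilon^2)$ and reach $B$ in one step, whereas the paper sets $\alpha=\frac{1-c\epsilon}{1-c}$ and then rescales $\alpha=\epsilon^2\beta$.
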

		{\bf Proof.} Condition ($C$) means that 
		\[
		\frac{1-c\epsilon}{1-c}\in D_0,\,\,\,\frac{1-c\epsilon^2}{1-c}\in D_0.
		\] It is well-known that 
		\[
		\varphi:\mathbb{F}_q\cup\{\infty\}\rightarrow\mathbb{F}_q\cup\{\infty\},\,\,\,\varphi(x)=\frac{1-x\epsilon }{1-x}.
		\] 
		is a permutation of the projective line $\mathbb{F}_q\cup\{\infty\}$ ($\infty=\frac{1}{0}$). We have 
		\[
		\varphi(0)=1,\,\,\varphi(1)=\infty,\,\,\varphi(\epsilon)=1+\epsilon=-\epsilon^2,\,\,\varphi(\epsilon^2)=0.
		\]
		Therefore,
	\[
\Gamma=\#\Bigl\{c\in \mathbb{F}_q\setminus \{0,1,\epsilon,\epsilon^{2}\}\Bigr|\,\,\frac{1-c\epsilon}{1-c}\in D_0,\,\,\,\frac{1-c\epsilon^2}{1-c}\in D_0\Bigr\}.
	\]	
	Thus 
		\begin{align*}
		\Gamma&=\#\Bigl\{c\in \mathbb{F}_q\setminus \{0,1,\epsilon,\epsilon^{2}\}\Bigr|\,\,\frac{1-c\epsilon}{1-c}\in D_0,\,\,\,\frac{1-c\epsilon^2}{1-c}\in D_0\Bigr\}\\
		&=\#\Bigl\{\alpha\in \mathbb{F}_q\setminus \{0,1,\epsilon,-\epsilon^{2}\}\Bigr|\,\,\alpha\in D_0,\,\,\,-\alpha\epsilon^2-\epsilon\in D_0\Bigr\}\,\,(\text{let } \alpha=\varphi(c)=\frac{1-c\epsilon}{1-c})\\
			&=\#\Bigl\{\beta\in \mathbb{F}_q\setminus \{0,\epsilon,\epsilon^{2},-1\}\Bigr|\,\,\epsilon^2\beta\in D_0,\,\,\,-\epsilon(\beta+1)\in D_0\Bigr\} \,\,(\text{let } \alpha=\epsilon^2\beta)\\
		&=\#\Bigl\{\beta\in \mathbb{F}_q\setminus \{0,\epsilon,\epsilon^{2},-1\}\Bigr|\,\,\beta\in D_0,\,\,\,\beta+1\in D_0\Bigr\} \,\,\\
		&=\#\Bigl\{\beta\in \mathbb{F}_q\setminus \{\epsilon,\epsilon^{2}\}\Bigr|\,\,\beta\in D_{00}\} \,\,(\text{since } 0\notin D_{00},\,-1\notin D_{00}).
	\end{align*}
	
	Since $\epsilon,\epsilon^2\in D_0$ and $\epsilon+1=-\epsilon^2\in D_0$, $\epsilon^2+1=-\epsilon\in D_0$, we get $\epsilon,\epsilon^2\in D_{00}$. Thus
	$$\Gamma=|D_{00}|-2=\frac{1}{9}(q+s-26)$$
	 by Lemma \ref{clc} (2). Finally, since $4q=s^2+27t^2$, we have $|s|\leq 2\sqrt{q}$. Then, when $q\geq73$, $$\Gamma\geq \frac{1}{9}(q+s-26)>0.$$ 
	Thus there exists $c\in \mathbb{F}_q\setminus \{0,1,\epsilon,\epsilon^{2}\}$ satisfying condition ($C$) in Theorem \ref{th1e=3}.
	
	$\hfill\Box$
		\begin{remark}\label{q=19or37} 
			Among $q<73$ satisfying $q\equiv 1\pmod{18}$, the only possibilities are $q=19$ and $q=37$.
			For $q=19$, we have $4q=76=7^2+27\cdot 1^2$, so $s=7$, and
			\[
			\Gamma=\frac{1}{9}(19+7-26)=0.
			\]
			For $q=37$, we have $4q=148=(-11)^2+27\cdot 1^2$, so $s=-11$, and
			\[
			\Gamma=\frac{1}{9}(37-11-26)=0.
			\]
		Thus for $q=19$ and $37$, there is no $c\in \mathbb{F}_q\setminus \{0,1,\epsilon,\epsilon^{2}\}$ satisfying condition $C$ in Theorem \ref{th1e=3}.
	\end{remark}
	
\begin{example}\label{ex1}
	Let $q=p=73 = en+1$, where $e=3$ and $n=24$.
	The multiplicative group $\mathbb{F}_{q}^{*}=\langle 5\rangle$, and $D_{0}=\langle 5^{3}\rangle$. Moreover, $\epsilon=5^{24}=8$, which has order $3$ in $\mathbb{F}_q^*$ and $\epsilon^2=-1-\epsilon =-9$. 
	Elements of $\mathbb{F}_{73}$ may be represented as
	\[
	\mathbb{F}_{73}=\{0,\pm 1,\pm 2,\dots,\pm 35,\pm 36\}.
	\]
	Then one obtains
	\[
	D_0 = \{ \pm 1,\, \pm 3,\, \pm 7,\, \pm 8,\, \pm 9,\, \pm 10,\, \pm 17,\, \pm 21,\, \pm 22,\, \pm 24,\, \pm 27,\, \pm 30 \}.
	\]
	We now determine the set $\Theta$ of $c\in \mathbb{F}_q\setminus \{0,1,8,-9\}$ satisfying condition ($C$). Since $\epsilon=8$, $\epsilon^2=-9$, condition ($C$) becomes
	\[
	\Theta=\Bigl\{c\in \mathbb{F}_q\setminus \{0,1,8,-9\}\Bigr|\,\,\frac{1-c\epsilon}{1-c}=\frac{1-8c}{1-c}\in D_0,\,\,\,\frac{1-c\epsilon^2}{1-c}=\frac{1+9c}{1-c}\in D_0\Bigr\}
	\]
	and directly calculating, we have
	\begin{equation*}\label{exam1}
	\Theta=\Bigl\{-2,-4,-16,-32,18,36\Bigr\},
	\end{equation*}
	so $|\Theta|=6.$ On the one hand, from Theorem \ref{cmany}, since  $4q=292=7^2+27\cdot3^2$, we have $s=7$, and hence $$|\Theta|=\frac{1}{9}(73+7-26)=6,$$ which is also coincident with the calculation by Magma. Moreover, from Theorem \ref{th1e=3}, for each $c\in\Theta$, the power functions $f(x)=x^{25}$ and $f(x)=x^{49}$ over $\mathbb{F}_{73}$ satisfy $\Delta(f,c)\le 3$, which is confirmed by our Magma computations.
	
\end{example}

Now we consider case $c=-1$. The condition ($C$) in Theorem \ref{th1e=3} becomes that 
\[
1-c=2,\,\,\,1-c\epsilon=1+\epsilon=-\epsilon^2,\,\,\,\text{ and } 1-c\epsilon^2=1+\epsilon^2=-\epsilon
\]
belong to a common coset for some $0\leq i\leq 2$. Since $6\mid n$, we have $-1\in D_0$, and since $3\mid n$, we have $\epsilon^2,\epsilon\in D_0$. Thus  $-\epsilon^2,-\epsilon\in D_0$, i.e., $i=0$. Hence, the condition ($C$) is reduced to $2\in D_0$. By Lemma \ref{clc} (3), this is equivalent to the representation $q = s^2+27t^2$. 
 In summary, we obtain the following result.
 \begin{theorem}\label{th1e=3c=-1}
 	Let $q=p^m=3n+1$, where $p\geq3$ an odd prime and $6\mid n$. Suppose that $q = s^2+27t^2$ for some integers $s, t$. 
 	Then for $c=-1$, $d=n+1$ or $d=2n+1$, the power function $f(x)=x^d$ over $\mathbb{F}_q$ has $(-1)$-differential uniformity $\Delta(f,-1)\leq 3$.
 \end{theorem}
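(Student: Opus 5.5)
The plan is to deduce Theorem \ref{th1e=3c=-1} directly from Theorem \ref{th1e=3} (equivalently from Theorem \ref{th1case-1} with $e=3$). We only need to check that $c=-1$ is an admissible parameter and that condition $(C)$ holds.

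First, admissibility. Since $p$ is odd, $-1\neq 0,1$. Moreover $-1\notin\{\epsilon,\epsilon^2\}$: otherwise $(-1)^3=-1$ would equal $1$, which is impossible because $\epsilon$ has order $3$. Hence $c=-1\in\mathbb{F}_q\setminus\{0,1,\epsilon,\epsilon^2\}$. Since $6\mid n$, the hypotheses of Theorem \ref{th1e=3} on $q$ are met, and $d=n+1,2n+1$ correspond to $l=1,2$, both coprime to $3$.

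Next, condition $(C)$. With $c=-1$ the three elements are $1-c=2$, $1+\epsilon=-\epsilon^2$ and $1+\epsilon^2=-\epsilon$, using $1+\epsilon+\epsilon^2=0$. Because $2\mid n$, Lemma \ref{cyc}(ii) gives $-1\in D_0$. Because $3\mid n$, we have $\epsilon=\pi^n\in\langle\pi^3\rangle=D_0$. By Lemma \ref{cyc}(i), $-\epsilon^2,-\epsilon\in D_0$. So $(C)$ holds with $i=0$ precisely when $2\in D_0$.

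The remaining step, and the only one requiring care, is to show $2\in D_0$ from $q=s^2+27t^2$. Lemma \ref{clc}(3) states the direction ``$2\in D_0\Rightarrow q=(s/2)^2+27(t/2)^2$'', where $(s,t)$ comes from $4q=s^2+27t^2$. We need the converse. Suppose $q=S^2+27T^2$. Then $4q=(2S)^2+27(2T)^2$. By the uniqueness in Lemma \ref{clc}(1), after adjusting the sign of $S$ so that $2S\equiv 1\pmod 3$, we get $s=2S$, so $s$ is even. The ``if and only if'' in Lemma \ref{clc}(3) then gives $2\in D_0$.

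One subtlety needs checking. When $p\equiv 2\pmod 3$ (so $m$ is even), the uniqueness in Lemma \ref{clc}(1) is stated under a normalization; I would invoke it as given, or handle this case separately via the classical fact that $2$ is a cube in $\mathbb{F}_q$. With $2\in D_0$, Theorem \ref{th1e=3} (or Theorem \ref{th1case-1}, since condition $(B)$ then holds) yields $\Delta(f,-1)\le 3$.
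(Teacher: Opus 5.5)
Your reduction is the same as the paper's, and it is fine: $c=-1$ is admissible, and $1+\epsilon=-\epsilon^2$, $1+\epsilon^2=-\epsilon$ lie in $D_0$ because $-1,\epsilon\in D_0$, so $(C)$ is equivalent to $2\in D_0$. The gap is in the last step, which the paper also dismisses in one line (``By Lemma \ref{clc} (3), this is equivalent to\dots'').

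The uniqueness in Lemma \ref{clc}(1) is uniqueness of the \emph{normalized} pair. When $p\equiv 1\pmod 3$ it is asserted only for solutions with $p\nmid st$, and without that normalization it is false. So from $q=S^2+27T^2$ you may conclude $s=\pm 2S$ only if $p\nmid S$ (which then forces $p\nmid T$). For $m\ge 2$ nothing guarantees this. Take $q=19^2=361$:
\begin{itemize}
\item $n=120$, so $6\mid n$, and $q=19^2+27\cdot 0^2$.
\item $4q=1444=(-11)^2+27\cdot 7^2$, so the normalized $s$ is $-11$, which is odd. Both $-38$ and $-11$ satisfy $4q=s^2+27t^2$ with $s\equiv 1\pmod 3$, so uniqueness genuinely fails without normalization.
\item Directly, $2^{(q-1)/3}=2^{120}\equiv 2^{12}\equiv 11\not\equiv 1\pmod{19}$, so $2\notin D_0$ and condition $(C)$ fails.
\end{itemize}
Your argument, like the paper's, therefore does not cover such $q$. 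Moreover, for large $q=p^2$ of this kind (with $2\in D_k$, $k\neq 0$), the Weil-bound argument of Theorem \ref{th1tight} applies to the four classes $D_{00},D_{k0},D_{0k},D_{2k,2k}$. Their defining conditions on $b\pm 1$, $b\pm\epsilon^{l}$, $b\pm\epsilon^{2l}$ are mutually consistent, so this produces a $b$ with four solutions. The extra hypothesis is therefore genuinely needed, not a technicality.

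Your worry about $p\equiv 2\pmod 3$ is misplaced; that case is harmless. There $m$ is even, and $4q=s^2+27t^2$ forces $t=0$. Also $2\in\mathbb{F}_p^*\subseteq D_0$, since $3\mid (q-1)/(p-1)=1+p+\cdots+p^{m-1}$. The repair is to strengthen the hypothesis in one of two ways:
\begin{itemize}
\item require a representation $q=S^2+27T^2$ with $p\nmid S$, which is automatic when $q=p$ is prime (since $p\mid S$ would force $p\mid T$ and then $p^2\mid p$); or
\item assume directly that $2\in D_0$, equivalently that the normalized $s$ of Lemma \ref{clc} is even.
\end{itemize}
Under that hypothesis, $4q=(2S)^2+27(2T)^2$ is the normalized representation. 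Your uniqueness argument then correctly gives $s=\pm 2S$ even, and the rest of your proof goes through.
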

 \begin{remark}
 	The least two examples of such $q$ are $109$ and $127$. Indeed,
 	\[
 	109=3\cdot36+1=1^2+27\cdot2^2,\,\,\,\,127=3\cdot42+1=10^2+27\cdot1^2.
 	\]
 \end{remark}
 
 At the end of this section, we discuss the case $p=2$ and $e=3$. In this case, $\mathbb{F}_4=\{0,1,\epsilon,\epsilon^2\}$ where $\epsilon,\epsilon^2$ are elements of order $3$ satisfying $1+\epsilon+\epsilon^2=0$. We consider extensions $\mathbb{F}_q$ of $\mathbb{F}_4$ with $q=4^m=2^{2m}$ for $m\geq2$, and take $c\in \mathbb{F}_q\setminus\mathbb{F}_4$. Now conditions (1) and (2) of Theorem \ref{generalcons} become:
 
    $(1)$ $1+c\epsilon^\lambda$ ($0\leq\lambda\leq 2$) belong to a common coset $D_i$ for some $0\leq i\leq 2$;
    
    $(2)$ $1+\epsilon=\epsilon^2$ and $1+\epsilon^2=\epsilon$ belong to a common coset $D_j$ for some $0\leq j\leq 2$,\\
    where $\mathbb{F}_q^*=\langle \pi\rangle$, $D_0=\langle \pi^3\rangle$, $D_\lambda=\pi^\lambda D_0$ for $0\leq\lambda\leq 2$.
    
    Condition (2) implies that $\frac{1+\epsilon}{1+\epsilon^2}=\epsilon\in D_{j-j}=D_0$. Recall our notation $q-1=3n$ and $\epsilon=\pi^n$, combined with the fact that $\epsilon$ has order $3$, we deduce $3\mid n$, so $q=3n+1\equiv 1 \pmod{9}$. For $q=4^m=2^{2m}$ with $m\geq2$, the divisibility $9\mid q-1$ holds if and only if $3\mid m$, which gives $\mathbb{F}_{64}\subseteq\mathbb{F}_q$. 
    Thus we have the following result.
	\begin{theorem}\label{th1e=3p=2}
		Let $q=2^{6m}$, where $m\geq1$, so that $q=3n+1$ with $3\mid n$. Let $\mathbb{F}_q^*=\langle \pi\rangle$, $D_0=\langle \pi^3\rangle$, and let $\epsilon=\pi^n$ is an element of order $3$ in $\mathbb{F}_q^*$. For $c\in \mathbb{F}_q\setminus\mathbb{F}_4$, if 
		\[
		\frac{1+c\epsilon}{1+c}\in D_0,\,\,\,\,\,\,\frac{1+c\epsilon^2}{1+c}\in D_0,
		\]
	 then for $d=n+1$ and $d=2n+1$, the power function $f(x)=x^d$ over $\mathbb{F}_q$ has $c$-differential uniformity $\Delta(f,c)\leq 3$. Moreover, the number of such $c\in \mathbb{F}_q\setminus\mathbb{F}_4$ is 
	 \[
	 \Gamma = \frac19\big(q + s - 26\big),
	 \quad \text{where} 
	 \begin{cases}
	 	s = 2\sqrt{q}= 2^{3m+1}, \,\,& 2\nmid m,\\
	 	s = -2\sqrt{q}=-2^{3m+1},\,\, & 2\mid m.
	 \end{cases}
	 \]
	\end{theorem}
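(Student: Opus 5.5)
The plan is to deduce the first assertion from Theorem \ref{generalcons} with $e=3$, and the counting assertion by repeating the M\"obius-transformation argument of Theorem \ref{cmany} in characteristic $2$, with the value of $s$ read off from Lemma \ref{clc}.

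\textbf{Setup.} First I check that the setting of Theorem \ref{generalcons} applies. Since $q=2^{6m}=64^m$, we have $63\mid q-1$, so $9\mid q-1$. Writing $q-1=3n$ gives $3\mid n$, which is the hypothesis $e\mid n$ with $e=3$. Then $\epsilon=\pi^n\in\langle\pi^3\rangle=D_0$, and $\mathbb{F}_4=\{0,1,\epsilon,\epsilon^2\}$, so $c\in\mathbb{F}_q\setminus\mathbb{F}_4$ is exactly $c\notin\{0,1,\epsilon,\epsilon^2\}$.

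\textbf{Conditions (1) and (2).} In characteristic $2$ we have $1-\epsilon=1+\epsilon=\epsilon^2$ and $1-\epsilon^2=\epsilon$, both in $D_0$. Hence condition (2) holds with $j=0$. Condition (1) says that $1+c$, $1+c\epsilon$ and $1+c\epsilon^2$ lie in a common coset $D_i$. By Lemma \ref{cyc}(i), and since $1+c\neq0$, this is equivalent to
\[
\frac{1+c\epsilon}{1+c}\in D_0 \quad\text{and}\quad \frac{1+c\epsilon^2}{1+c}\in D_0 .
\]
So Theorem \ref{generalcons}, applied with $l=1,2$ (that is, $d=n+1$ and $d=2n+1$), gives $\Delta(f,c)\le 3$.

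\textbf{Counting.} I will follow the proof of Theorem \ref{cmany} verbatim, using $\varphi(x)=\frac{1+x\epsilon}{1+x}$, a permutation of the projective line. Its values are $\varphi(0)=1$, $\varphi(1)=\infty$, $\varphi(\epsilon)=\epsilon^2$ and $\varphi(\epsilon^2)=0$.

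Put $\alpha=\varphi(c)$. The second ratio is then $\frac{1+c\epsilon^2}{1+c}=\alpha\epsilon^2+\epsilon$, because signs disappear in characteristic $2$. Substitute $\alpha=\epsilon^2\beta$, so that $\alpha\epsilon^2+\epsilon=\epsilon(\beta+1)$. Since $\epsilon\in D_0$, the conditions become $\beta\in D_0$ and $\beta+1\in D_0$, i.e.\ $\beta\in D_{00}$. The excluded values are $\beta\in\{0,1,\epsilon,\epsilon^2\}$. Here $0$ and $1=-1$ are automatically outside $D_{00}$. On the other hand $\epsilon,\epsilon^2\in D_{00}$, because $\epsilon+1=\epsilon^2\in D_0$ and $\epsilon^2+1=\epsilon\in D_0$. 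Therefore $\Gamma=|D_{00}|-2$, and Lemma \ref{clc}(2) gives $\Gamma=\frac19(q-8+s-18)=\frac19(q+s-26)$.

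\textbf{Determining $s$ (main obstacle).} It remains to find $s$ from $4q=s^2+27t^2$ with $s\equiv1\pmod 3$. The key point is to justify that $t=0$ when $p=2\equiv2\pmod 3$. Then $s=\pm2\sqrt q=\pm2^{3m+1}$, and the sign is forced by the congruence. Since $2\equiv-1\pmod 3$, we get $2^{3m+1}\equiv(-1)^{m+1}\pmod 3$. So $s=+2^{3m+1}$ when $m$ is odd and $s=-2^{3m+1}$ when $m$ is even, as stated.

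For $t=0$, I would argue via the cubic Jacobi sum. The integer $s$ arises from $J(\chi,\chi)$, with $2J=s\pm3\sqrt{-3}\,t$, where $\chi$ is a cubic character. Since $p\equiv2\pmod3$, Stickelberger's theorem says the cubic Gauss sums over $\mathbb{F}_q$ are pure, i.e.\ rational multiples of $\sqrt q$. Hence $J(\chi,\chi)=g(\chi)^2/g(\chi^2)$ is a rational integer $\pm\sqrt q$, which forces $t=0$.

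Alternatively, if one wants to avoid Gauss sums, one can count $|D_{00}|$ directly. Here $D_0$ is the set of cubes and, because $3\mid m$ with $q=4^{3m'}$, it contains the subfield elements $\mathbb{F}_{2^{3m}}^*$. One can then verify the formula for $m=1$ (where $q=64$ and $\Gamma=2$) and propagate it. I would present the Stickelberger argument as the primary route.
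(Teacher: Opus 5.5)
Your proposal is correct and follows the paper's structure. You apply Theorem \ref{generalcons} with $e=3$ and $l=1,2$; condition (2) is automatic, since $1+\epsilon=\epsilon^2$ and $1+\epsilon^2=\epsilon$ lie in $D_0$. You then use the substitution $\alpha=\varphi(c)$, $\alpha=\epsilon^2\beta$ from Theorem \ref{cmany} to get $\Gamma=|D_{00}|-2$. Finally you take $t=0$ and fix the sign of $s=\pm2^{3m+1}$ by $s\equiv1\pmod 3$. Your explicit characteristic-$2$ bookkeeping, with excluded values $\beta\in\{0,1,\epsilon,\epsilon^2\}$, is exactly what the paper leaves as ``carries over''.

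The one real difference is how you justify $t=0$. The paper's one-line reason ($4q$ is a power of $2$ and $27t^2$ is $0$ or has an odd factor) does not by itself exclude $t\neq0$. Your Stickelberger argument is valid: in the semiprimitive case $2\equiv-1\pmod 3$ one has $g(\chi)=\pm\sqrt q$, so $J(\chi,\chi)=g(\chi)^2/g(\chi^2)=\pm\sqrt q\in\mathbb{Z}$. It does, however, lean on identifying the $s$ of Lemma \ref{clc} with $2\,\mathrm{Re}\,J(\chi,\chi)$, which comes from the proof behind that lemma rather than its statement.

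Two lighter alternatives exist. If you take the uniqueness in Lemma \ref{clc}(1) at face value, it suffices to exhibit $4q=(\pm2^{3m+1})^2+27\cdot0^2$. Or argue elementarily: remove the common power of $2$ from $s$ and $t$. If both are then odd, $s^2+27t^2\equiv4\pmod 8$ while being $\ge28$. If exactly one is odd, the sum is odd and $\ge27$ when $t\neq0$. Neither case gives a power of $2$.

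You should drop your fallback ``direct count''. The inclusion $\mathbb{F}_{2^{3m}}^*\subseteq D_0$ requires $3\mid 2^{3m}+1$, which holds only for odd $m$; it fails for $q=2^{12}$. Also, ``verify for $m=1$ and propagate'' is not an argument.
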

		{\bf Proof.} The proof of the counting formula $\Gamma = \frac19(q+s-26)$ from Theorem \ref{cmany} carries over to the case $p=2$. It remains to determine the value of $s$. Since 
		\begin{align*}
			4q = 4\cdot 2^{6m}=2^{6m+2}=(2^{3m+1})^2,
		\end{align*}
		and $s$ satisfies $4q = s^2 + 27t^2$ with $s\equiv 1\pmod{3}$, we must have $t=0$ (because $4q$ is a power of $2$ and $27t^2$ is either $0$ or has an odd factor). Hence $s=\pm2\sqrt{q}=\pm2^{3m+1}$. The sign is determined by the congruence $s\equiv 1\pmod{3}$. Since
		\[
		2^{3m+1}\equiv (-1)^{m+1}\pmod{3},
		\]
		we obtain
		\[
		s=
		\begin{cases}
			2^{3m+1}, & 2\nmid m,\\
			-2^{3m+1}, & 2\mid m.
		\end{cases}
		\]
		This completes the proof.
		
			$\hfill\Box$
			
		\begin{example}\label{ex2}
			Let $q=2^6=64 = en+1$, where $e=3$ and $n=21$. 
			We choose $g(x)=x^6+x^5+1$ a primitive polynomial of degree $6$ over $\mathbb{F}_{2}$, to construct $\mathbb{F}_{64}$. Let $\pi$ be a root of $g(x)$ in $\mathbb{F}_{64}$, then $\mathbb{F}_{q}^{*}=\langle \pi\rangle$. Let $D_{0}=\langle \pi^{3}\rangle$, Take $\epsilon=\pi^n=\pi^{21}$ , which
			is an element of order $3$ in $\mathbb{F}_{64}$.
			We now determine the set $\Sigma$ of $c\in \mathbb{F}_{64}\setminus \mathbb{F}_{4}$ which the hypotheses of which the hypotheses of Theorem \ref{th1e=3p=2} hold: 
			\[
			\Sigma=\Bigl\{c\in \mathbb{F}_{64}\setminus \mathbb{F}_{4}\Bigr|\,\,\frac{1+c\epsilon}{1+c}\in D_0,\,\,\,\frac{1+c\epsilon^2}{1+c}\in D_0\Bigr\}.
			\]
			From Theorem \ref{th1e=3p=2}, since  $4q=256=s^2+27\cdot t^2$ with $s\equiv 1\pmod{3}$, we have $s=16$, and hence $$|\Sigma|=\frac{1}{9}(64+16-26)=6.$$  which is coincident with the calculation by Magma. 
			
			To simplify the computation, we consider the action of a group on $\Sigma$.
			The maps
			\[
			\eta(x)=x^2,\qquad \delta(x)=x^{-1}
			\]
			are permutations of $\mathbb F_{64}^*=\{1,\pi,\pi^2,\dots,\pi^{62}\}$ (with $\pi^{63}=1$).
			They generate a group 
			\[
			G=\langle\eta,\delta\rangle=\{\eta^i\delta^j\,|\,0\leq i\leq5,\,j=0,1\}
			\]
			of order $12$.
			Since $\eta(\epsilon)=\epsilon^2$,  $\delta(\epsilon)=\epsilon^{-1}=\epsilon^2$, and $\eta(1)=\delta(1)=1$, 
			both $\eta$ and $\delta$ preserve $\mathbb F_4^*=\{1,\varepsilon,\varepsilon^2\}$,
			and hence $G$ acts on $\mathbb F_{64}\setminus\mathbb F_4$.
			
			We now observe that, for any $c\in\mathbb F_{64}\setminus\mathbb F_4$,
			\begin{itemize}
				\item[(I)] $c\in\Sigma$ if and only if $\eta(c)=c^2\in\Sigma$;
				\item[(II)] $c\in\Sigma$ if and only if $\delta(c)=c^{-1}\in\Sigma$.
			\end{itemize}
			Indeed, assertion (I) follows by squaring:
			\[
			\left(\frac{1+c\epsilon}{1+c}\right)^2=\frac{1+c^2\epsilon^2}{1+c^2},\quad
			\left(\frac{1+c\epsilon^2}{1+c}\right)^2=\frac{1+c^2\epsilon}{1+c^2}.
			\]
			Since $D_0$ is closed under squaring, we have
			\[
			\frac{1+c^2\epsilon^2}{1+c^2}\in D_0 \iff \frac{1+c\epsilon}{1+c}\in D_0,\quad
			\frac{1+c^2\epsilon}{1+c^2}\in D_0 \iff \frac{1+c\epsilon^2}{1+c}\in D_0,
			\]
			which implies $c^2 \in \Sigma \iff c\in \Sigma$.
			
			For (II), the identities
			\[
			\frac{1+c^{-1}\epsilon}{1+c^{-1}} = \epsilon\cdot \frac{1+c\epsilon^2}{1+c},\quad
			\frac{1+c^{-1}\epsilon^2}{1+c^{-1}} = \epsilon^2\cdot \frac{1+c\epsilon}{1+c}
			\]
			yield the equivalences
			\[
			\frac{1+c^{-1}\epsilon}{1+c^{-1}}\in D_0 \iff \frac{1+c\epsilon^2}{1+c}\in D_0,\quad
			\frac{1+c^{-1}\epsilon^2}{1+c^{-1}}\in D_0 \iff \frac{1+c\epsilon}{1+c}\in D_0,
			\]
			because $\epsilon,\epsilon^2 \in D_0$.
			Therefore $c^{-1} \in \Sigma \iff c\in \Sigma$.
			This means that  $G$ acts on $\Sigma$ as a permutation group, 
			so $\Sigma$ is a disjoint union of $G$-orbits.
			A direct computation shows that $G$ has exactly two orbits of size $6$ on $\mathbb{F}_{64}\setminus\mathbb{F}_4$, denoted $O_1$ and $O_2$.
			\[
			O_1=\{\pi^7,\pi^{14},\pi^{28},\pi^{49},\pi^{35},\pi^{56}\},\qquad
			O_2=\{\pi^9,\pi^{18},\pi^{36},\pi^{45},\pi^{54},\pi^{27}\}.
			\]
			Since $|\Sigma|=6$, the set $\Sigma$ must coincide with one of the two size-$6$ orbits.
			
			To determine which orbit is $\Sigma$, it suffices to test a single representative. Take $c=\pi^7\in O_1$. Then
			\[
			1+c\varepsilon=1+\pi^7\pi^{21}=1+\pi^{28},\qquad
			1+c=1+\pi^7,\qquad
			1+c\varepsilon^2=1+\pi^7\pi^{42}=1+\pi^{49}.
			\]
			A direct computation in $\mathbb F_{64}$ yields
			\[
			\frac{1+c\varepsilon}{1+c}=\pi^6\in D_0=\langle\pi^3\rangle,
			\qquad
			\frac{1+c\varepsilon^2}{1+c}=\pi^{-33}\in D_0.
			\]
			Hence $\pi^7\in\Sigma$, and therefore $\Sigma=O_1$.
			Explicitly,
			\[
			\Sigma=\{\pi^7,\pi^{14},\pi^{28},\pi^{49},\pi^{35},\pi^{56}\}.
			\]
			
			Consequently, for each $c=\pi^i$ with $i\in\{7,14,28,49,35,56\}$,
			the power function $f(x)=x^d$ on $\mathbb F_{64}$ (with $d=22$ or $d=43$)
			satisfies $\Delta(f,c)\le 3$, which is confirmed by our Magma computations.

			
			\end{example}
			
\section{Conclusion}\label{sec5}

In this paper, we present a general result (Theorem \ref{generalcons}) stating that for each integer $e\ge 3$ and finite field $\mathbb{F}_q$ satisfying $q-1=en$ and $e\mid n$, there exists $c\in \mathbb{F}_q\setminus \{0,1,\epsilon,\dots,\epsilon^{e-1}\}$, where $\epsilon$ is an element of order $e$, such that for every $d=ln+1$ with $1\le l\le e-1$ and $\gcd(l,e)=1$, the power function $f(x)=x^d$ over $\mathbb{F}_q$ satisfies
$\Delta(f,c)\le e,$ 
provided that conditions (1) and (2) of Theorem \ref{generalcons} hold. (If $\gcd(l,e)>1$, the parameter $e$ appearing in the exponent $d$ may be replaced by a smaller divisor.)

We then show that conditions (1) and (2) are mild. Indeed, for each fixed $e\ge 3$ and prime $p\nmid e$, there exist infinitely many extension fields $\mathbb{F}_q$ of $\mathbb{F}_p$ such that these conditions hold for some $c\in \mathbb{F}_q\setminus \{0,1,\epsilon,\dots,\epsilon^{e-1}\}$. Furthermore, using the classical Weil bound for multiplicative character sums, we prove that for sufficiently large $q$, the upper bound $\Delta(f,c)\le e$ is tight (Theorem \ref{th1tight}).

In section \ref{secc=-1}, we also discuss the important special case $c=-1$. In this setting, conditions (1) and (2) simplify to condition (B) stated in Theorem \ref{th1case-1}. For the particular case $e=3$, these conditions reduce further to condition (C) in Theorem \ref{th1e=3}, and the number of elements $c$ satisfying condition (C) is explicitly determined in Theorem \ref{cmany}.

Motivated by Theorem \ref{generalnot}, we pose the following problem. For a fixed integer $e\ge 3$, are there infinitely many primes $p\equiv 1\pmod e$ such that $\mathbb{F}_p^*$ contains an element $\epsilon$ of order $e$, and there exists $c\in \mathbb{F}_p\setminus \{0,1,\epsilon,\dots,\epsilon^{e-1}\}$ satisfying conditions (1) and (2) of Theorem \ref{generalcons}? By Theorem \ref{th1e=3c=-1}, the answer is affirmative for $e=3$ and $c=-1$, assuming there exist infinitely many primes $p\equiv 1\pmod{18}$ of the form $p=s^2+27t^2$.

A further, and more challenging, problem is to determine the $c$-differential spectrum of the power functions constructed in Theorem \ref{generalcons}. For $e=2$, such spectra have been computed using elliptic curves over finite fields (see, e.g.,\cite{WangY2026DAM}). For $e\ge 3$, the problem may be related to the study of algebraic curves of the form $y^e=g(x)$ for some $g(x)\in\mathbb{F}_q[x]$.

\end{document}